\documentclass[lettersize,journal]{IEEEtran}
\usepackage{amsmath,amsfonts}
\usepackage{algorithmic}
\usepackage{algorithm}
\usepackage{array}
\usepackage[caption=false,font=normalsize,labelfont=sf,textfont=sf]{subfig}
\usepackage{textcomp}
\usepackage{stfloats}
\usepackage{url}
\usepackage{verbatim,cancel}
\usepackage{graphicx}

\hyphenation{op-tical net-works semi-conduc-tor IEEE-Xplore}

\usepackage[url,hyperrefblack,notheorems,IEEEtran]{research17} % customized LaTeX commands
\usepackage{soul}
\usepackage{dutchcal}
\usepackage{xcolor}
\usepackage{amsthm}
\usepackage{nicematrix}
\usepackage{arydshln}
\usepackage{tikz}
\usetikzlibrary{fit}
% updated with editorial comments 8/9/2021

%%% re-define the numbering of theorems, lemmas, examples, etc.
\newtheorem{theorem}{\mytheoremname}
\newtheorem{lemma}{\mylemmaname}

\newtheorem{conjecture}{\myconjecturename}
\newtheorem{proposition}{\mypropositionname}

\newtheorem{definition}{\mydefinitionname}
\newtheorem{remark}{\myremarkname}
\newtheorem{example}{\myexamplename}

% *** Other Definitions ***
 % Resize based on line or column width
\newcommand*{\Scale}[2][4]{\scalebox{#1}{\ensuremath{#2}}} % Scales an environment simply by a fraction
 % make text be vertical in a table

\newcommand\undermat[2]{%
  \makebox[0pt][l]{$\smash{\underbrace{\phantom{%
    \begin{matrix}#2\end{matrix}}}_{\text{$#1$}}}$}#2}
\NiceMatrixOptions{code-for-first-row = \scriptstyle,code-for-first-col = \scriptstyle }

\definecolor{macaroniandcheese}{rgb}{1.0, 0.74, 0.53}
\definecolor{babyblue}{rgb}{0.54, 0.81, 0.94}
\definecolor{celadon}{rgb}{0.67, 0.88, 0.69}

 % comments
 % comments
 % comments

\newcommand{\bs}[1]{\ensuremath{\boldsymbol{#1}}}

\newcommand{\vol}[1]{\operatorname{vol}\left(#1\right)} % volume
 %Gray map
 %shift
 % Galois field
 % shaping ensemble
 % code PAM
 % average ensemble
\newcommand*{\medcap}{\mathbin{\scalebox{1.5}{\ensuremath{\cap}}}}%
\newcommand*{\medcup}{\mathbin{\scalebox{1.5}{\ensuremath{\cup}}}}%

\definecolor{macaroniandcheese}{rgb}{1.0, 0.74, 0.53}
\definecolor{babyblue}{rgb}{0.54, 0.81, 0.94}
\definecolor{celadon}{rgb}{0.67, 0.88, 0.69}

 % comments
 % comments
 % comments

\begin{document}

\title{Coset Shaping: Constructions and Bounds}

\author{Irina Bocharova~\IEEEmembership{Senior Member,~IEEE}, Maiara F. Bollauf~\IEEEmembership{Member,~IEEE}, \\ and Boris Kudryashov~\IEEEmembership{Senior Member,~IEEE} 
% \IEEEauthorblockA{Simula UiB, N--5006 Bergen, Norway\\             
%             Emails: \{maiara, lin, oyvindy\}@simula.no}}
%\author{IEEE Publication Technology,~\IEEEmembership{Staff,~IEEE,}
        % <-this % stops a space
\thanks{This work was supported by the Estonian Research Council, grant PRG2531.}
\thanks{All authors are affiliated to the University of Tartu, Narva mnt 18, Tartu 51009, Estonia.
        Their respective e-mail addresses are irina.bocharova@ut.ee, maiara.bollauf@ut.ee, and boris.kudryashov@ut.ee.}% <-this % stops a space
\thanks{This paper is to be published partially in the proceedings of the International Zurich Seminar on Information and Communication (IZS), Zurich, Switzerland, February 2026~\cite{bocharova2025coset}.}}
%\thanks{Manuscript received April 19, 2021; revised August 16, 2021.}}

% The paper headers
%\markboth{Journal of \LaTeX\ Class Files,~Vol.~14, No.~8, August~2021}%
%{Shell \MakeLowercase{\textit{et al.}}: A Sample Article Using IEEEtran.cls for IEEE Journals}

%\IEEEpubid{0000--0000/00\$00.00~\copyright~2021 IEEE}
% Remember, if you use this you must call \IEEEpubidadjcol in the second
% column for its text to clear the IEEEpubid mark.

\maketitle

\begin{abstract}
A new geometric shaping technique, referred to as \emph{coset shaping}, is proposed and analyzed for coded QAM and PAM signaling. This method can be applied to both information and parity bits without introducing additional complexity. It is shown that, as the error-correcting code length and the modulation order grow, the gap to capacity of the proposed shaping scheme can be made arbitrarily small. A Gallager-type bound is provided together with numerical results, including  performance comparisons for the shaping scheme combined with short and mid-length binary-coded, as well as nonbinary LDPC-coded QAM signaling.

%Here, we extended our simulations from the previous paper to include more LDPC code families. Of different lengths, rates, and decoding techniques.

\end{abstract}

\begin{IEEEkeywords}
Geometric shaping, coset shaping, coded modulation, QAM, Gallager bound.
\end{IEEEkeywords}

\section{Introduction}
\label{sec:introduction}

\IEEEPARstart{A}{n} asymptotic loss relative to the Shannon signal-to-noise (SNR) limit of reliable coded modulation transmission with uniformly distributed input signals over the AWGN channel reaches 1.53 dB ({\em ultimate limit}). If both code length and signal constellation size tend to infinity, this gap can be reduced by using a Gaussian distribution on the input signals. Shaping in an AWGN channel refers to transmitting low-energy signals more frequently than high-energy to mimic Gaussian signaling.

It is generally assumed that for QAM modulation with a finite alphabet size, the optimal input distribution is an approximation of the Maxwell-Boltzmann distribution.  A few shaping methods are used in practice. These techniques directly or indirectly implement optimal or near-optimal nonuniform input signaling.

The so-called {\em geometric} shaping uses equiprobable, non-uniformly spaced signal points, with a preference for low-energy points. It was studied, for example, in \cite{sun1993approaching,quzhen2017}. 
Multidimensional geometric shaping can be efficiently implemented in the form of {\em Voronoi shaping} (see, for example, \cite{buglia2021voronoi, steiner2017comparison,zhou2017shaping,sheng2024multidimensional,li2025coded}). In particular, in  \cite{sheng2024multidimensional,li2025coded},  Voronoi shaping for multilevel coded modulation was analyzed. 

{\em Probabilistic amplitude shaping} (PAS) was studied, for example, in \cite{bocherer2015bandwidth, schulte2015constant, gultekin2018constellation, fehenberger2015ldpc,
steiner2017ultra}. It uses uniformly spaced signal points with probabilities determined by the nonuniform distribution. Implementations of PAS in the form of enumerative decoding of constant-composition codes \cite{bocherer2015bandwidth} and enumerative sphere shaping codes \cite{gultekin2018constellation,gultekin2020probabilistic} were considered in combination with binary LDPC error-correcting codes. In these schemes,  a \emph{distribution matcher} (DM)
maps information bits to shaped bits, which are then systematically encoded by appending uniformly distributed parity bits. Linear layered probabilistic shaping, which extends PAS to probabilistic shaping of parity bits, was introduced in \cite{bocherer2019probabilistic} and further developed in \cite{lentner2025efficient}. The shaping scheme in \cite{bocherer2019probabilistic} processes codewords of a linear block code with
two DMs. The message bits are shaped with a conventional 
DM and encoded systematically, and a so-called syndrome DM shapes the parity bits. Nonbinary (NB) LDPC codes of moderate lengths over GF$(2^6)$ and GF$(2^8)$ used in conjunction with PAS were analyzed in \cite{steiner2017ultra}. 
PAS applied after encoding with moderate- and long-length (NB) LDPC codes was considered in \cite{bocharova2024analysis} and \cite{bocharova2023ldpc}, respectively.

In \cite{bocharova2025coset}, we proposed 
a new competitive shaping technique, called {\em coset shaping}.
The new scheme uses a combination of a linear error-correcting code (ECC) with a
higher rate than the target transmission rate and a shaping
code. The shaping code is a set of coset leaders of a linear
code. The best coset leader is selected using the minimum signal energy criterion. Similar to \cite{bocherer2019probabilistic}, the proposed shaper allows one to shape both information and parity bits. For the proposed scheme, the gap-to-capacity can be made arbitrarily small as the length of the ECC and the modulation order approach infinity. In \cite{bocharova2025coset}, the sketch of the corresponding proof is presented.

In this paper, we expand on the asymptotic analysis of the coset shaper. A modified Gallager-type upper bound on the probability of maximum-likelihood (ML) decoding error for finite-length coded, shaped QAM signaling is presented and compared with the Gallager random coding bound (RCB).   
The presented bounds shed light on the choice of signal probability distribution for shaping. The bounds computed for different code lengths show that the capacity-achieving distribution is far from optimal in some scenarios.

This paper is organized as follows. Preliminaries are in Section~\ref{prelim}, and the coset shaping scheme is described in Section \ref{sec:coset}. The analysis of the convergency of the coding gain to its ultimate limit is given in Section \ref{analysis}. The Gallager bound and its modification are presented and discussed in Section  \ref{bounds}. Simulation results and comparisons are given in Section \ref{sec:numres}. 

\section{Preliminaries \label{prelim}}
\label{sec:preliminaries}

\subsection{Notation}
\label{sec:notation}

We denote by $\Naturals$, $\Integers$, and $\Reals$ the set of naturals, integers, and reals, respectively. An interval over $\Integers$ is $[i:j]\eqdef\{i,i+1,\ldots,j\}$ for $i,j\in \Integers$, $i\leq j$. Vectors are \emph{row} vectors and boldfaced, e.g., $\vect{x}$. Matrices and sets are given by capital sans serif letters and calligraphic uppercase letters, respectively, $\mat{X}$ and $\set{X}$. An identity matrix $n \times n$ is denoted by $\mat{I}_n$. The addition in $\Field_2$ is denoted by $\oplus$, and $+$ is the real addition.

Any $M^2$-QAM signal sequence of length $n$ can be interpreted as an $M$-PAM signal sequence of length $2n$. In this paper, we use both interpretations interchangeably. In the description of the shaping method and asymptotic analysis, we mainly use PAM signaling. In simulations and finite-length comparisons with other techniques, we consider QAM signaling.

\subsection{Modulation and Mapping for Shaped Coded PAM}
\label{sec:modulation-mapping}

Consider $2^m$-PAM modulation. Let $n_{\rm s}$ denote the codelength in signals.
Each of  $2^m$ signal points is indexed by an $m$-bit binary sequence. 
A PAM modulator is used to convert a codeword of length $n=mn_{\rm s}$ bits into $n_{\rm s} = \nicefrac{n}{m}$ PAM signals
\[x_t=(2s_t-1)A(a_{1t},\dots,a_{(m-1)t}),\] where $t = 1,2,\dots,n_{\rm s}$, $s_t$ is sign bit, $a_{1t},\dots, a_{(m-1)t)}$ are amplitude bits, and
$A(\cdot)$ denotes the amplitude of the signal point. Matching of code symbols with unshaped and shaped PAM signals can significantly improve the decoding performance of coded modulation systems. For instance, this holds for nonbinary (NB) LDPC-coded unshaped and shaped PAM signals used with the belief propagation (BP) decoding ~\cite{declercq2004regular, bocharova2023nonbinary, bocharova2024analysis}. 

We consider a mapping of binary codewords to the sequences of $2^m$-PAM signals which facilitates the {\em coset shaping} proposed in this paper. 

Let $\code{C}$ be  a linear $[n,k]$ binary code
with a generator matrix
\begin{equation*}
\label{c1}§
\mat{G}=
\begin{pNiceMatrix}
% \mat{G}_0 & \mat{G}_1& ... & \mat{G}_{m-1}
\mat{G}_{m-1} & \mat{G}_{m-2}& ... & \mat{G}_{0} 
\end{pNiceMatrix},    
\end{equation*} 
where the submatrices $\mat{G}_i$ are of size $k\times n_{\rm s}$, $n_{\rm s}=\nicefrac{n}{m}$, for $i \in [0:m-1]$, where $m$ is a divisor of $n$.

It is assumed that each of $\mat{G}_i$ 
determines the 
$i$-th bit in the $m$-bit long representation of the PAM signal $x_t$. In other words, $\mat{G}_0$ determines sign bits $s_t$ of $n_{\rm s}$ PAM signals, and $\mat{G}_i$ determine the corresponding amplitude bits, for $i \in [1:m-1]$.  

For a given message $\bs u \in \mathbb F_2^k$, the codeword $\bs v \in \code{C}$ is 
\begin{equation}
\label{c2}
\bs v =
\begin{pNiceMatrix}
\bs v_{m-1} & \bs v_{m-2} & ...& \bs v_{0}
\end{pNiceMatrix},
\quad \bs v_i=\bs u \mat{G}_i, % \quad \bs c_i \in \mathbb F_2^{n_{\rm s}}, i=0,...,m-1.    
\end{equation} 
$\bs v_i \in \mathbb F_2^{n_{\rm s}}, i \in [0:m-1]$. 
Then $\bs v$ is rewritten in the form of a $m\times n_{\rm s}$ matrix $\trans{\begin{pNiceMatrix} \bs v_0 ~ \bs v_1 ~ \dots  \bs v_{m-1} \end{pNiceMatrix}}$ whose $i$-th column is used as the Gray-coded binary index of $2^m$-PAM signal point $x_i \in \mathcal X=\{-2^m+1, \dots, -3, -1, 1, 3, \dots, 2^m-1\}$.%, $s_i=2x_i-2^m+1$.

\begin{example} 
\label{ex1}
Let $m=3$, $n=6$, $n_{\rm s}=2$. A generator matrix for a $[6,3]$-code $\code{C}$ in systematic form is
\begin{equation}\label{shorient}
\mat G= \begin{pNiceMatrix}
    1 &0 &  0 &1 & 1 & 0 \\
    0 & 1 &  0 &1 & 0 & 1 \\
    \undermat{{\mat G}_2}{0 & 0} & \undermat{{\mat G}_1}{1 & 0} & \undermat{{\mat G}_0}{1 & 1}
\end{pNiceMatrix}.
\end{equation}
\vspace{0.3cm}

Let $\bs u=(1,1,0)$, the corresponding codeword $\bs v= \bs u \mat{G}  = (1, 1, 0, 0, 1, 1)$  used with $2^3$-PAM. First, we split the codeword into $m=3$ parts and place the equal-length vectors in the rows of a matrix of size $m \times n_{\rm s} = 3 \times 2$, as
\begin{equation*}
\begin{pNiceMatrix}
\bs v_0\\
\bs v_1\\
\bs v_2
\end{pNiceMatrix}
=
\begin{pNiceMatrix}
1&1\\
0&0\\
1&1
\end{pNiceMatrix}.
\end{equation*}
Then columns of this matrix are mapped to the PAM signal sequence $\bs s=(5,5)$ according to bit reflected binary Gray (BRBG) code~\cite{agrell2004optimality} (see Table \ref{tab1}). \hfill \exampleend
\end{example}

\begin{table}[!t]
\label{tab1}
    \caption{BRBG code for 8-PAM}
\begin{tabular}{|c|c|c|c|c|c|c|c|c|}
\hline
 %Decimal&0 &1 &2 &3&4&5&6&7 \\\hline
 BRBG    &000&001&011&010&110&111&101&100\\
 \hline
 8-PAM  &-7&-5&-3&-1&1&3&5&7 \\\hline 
\end{tabular}
\end{table}

\begin{definition}
The above one-to-one mapping of length $n=mn_{\rm s}$ codewords $\bs v \in \Field_2^n$ to 
signal sequences $\bs s \in \set{X}^{n_{\rm s}}$ is denoted as $\psi$. The sequence $\bs s=\psi (\bs v)$ is called {\em PAM-image of $\bs v$}. A set $\psi(\code{C})$ is called {\em PAM-image of the code}.  
\end{definition}

\section{Coset shaping 
\label{sec:coset}}

We introduce the \emph{coset shaping} scheme for coded {$2^m$-PAM} modulation in its general form next.
The proposed coset shaping scheme is shown in Fig.~\ref{fig:preshaping}, along with the detailed notation in Table~\ref{tab:notations}. Unlike the combinatorial shaping scheme studied in \cite{bocherer2015bandwidth, gultekin2018constellation,gultekin2020probabilistic}, our scheme is based on a linear code and also allows for shaping of the code parity-check bits. At the same time, it provides better performance than linear layered shaping in \cite{bocherer2019probabilistic, lentner2025efficient} since the shaping procedure uses the squared Euclidean norm as the design criterion. The latter leads to an increase in computational complexity, but, as will be shown, sufficiently short shaping codes used in conjunction with ECC of varying lengths, including very long ones,  give a significant gain in decoding performance. The possibility of using this procedure with long ECC implicitly suggests that the complexity of this technique is relatively low compared to the error-correcting code decoding complexity.  

%The common structure of code-based shaping schemes considered in this paper is in %Fig.~\ref{fig:preshaping}. 
As mentioned above, for PAM-modulation of order $M=2^m$, one of the bits assigned to each signal is a sign bit (s-bit), and the other $m-1$ bits are amplitude bits (a-bits). Signal energy is determined solely by its amplitude. Therefore, a possible energy gain can be obtained by allocating additional redundancy for a desirable (non-uniform) probability distribution on signal amplitudes. Since the parity bits of linear error-correcting codes are always uniformly distributed, we associate information bits mainly with a-bits and parity bits mainly with s-bits. Depending on the target rate, the message length can be larger or smaller than the total number of a-bits. Thus, sometimes a part of the message bits can be transmitted as s-bits, while in some scenarios the parity bits can be assigned with a-bits.         

\label{shap}
\begin{figure}[!t]
\centering
\Scale[0.95]{\includegraphics[width=100mm]{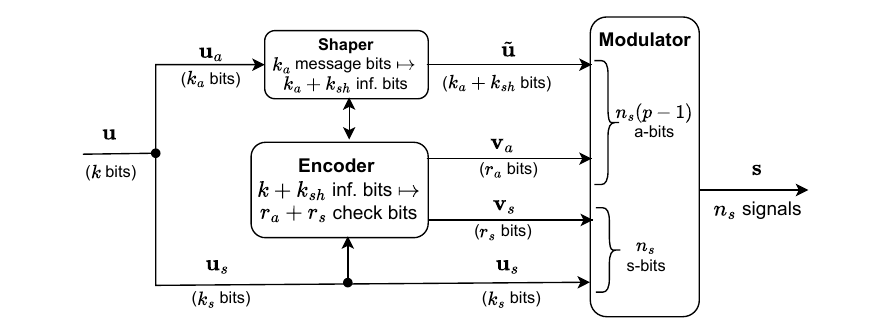}}%
\caption{Shaping scheme.}
\label{fig:preshaping} 
\end{figure}

\begin{table}[!t]
    \centering
\caption{Notations used in Fig.~\ref{fig:preshaping}}
    \label{tab:notations}    
    \begin{tabular}{m{3.2cm}|m{4.6cm}}
    % \multicolumn{2}{c}{Notations}  \\ 
    \hline
    Notation & Meaning \\ \hline 
$k_{\rm a}$ ($k_{\rm s}$) & number of message bits transmitted as \\
&amplitude (resp. sign) bits \\ 
$r_{\rm a}$ ($r_{\rm s}$) & number of parity bits transmitted as \\
&amplitude (resp. sign) bits\\    
$k_{\rm sh} = N_{\rm b} k_{\rm sh}^{\rm b}$   & number of auxiliary bits used for shaping \\ 
$k=k_{\rm a}+k_{\rm s}$   & number of message bits \\
$n=mn_{\rm s}$  &   length of \([n,k_{\rm c}]\)-code \\ %  of rate \(R_{\rm c}=k_{\rm c}/n\)  \\
$k_{\rm c}=k+k_{\rm sh}$ & dimension of $[n,k_{\rm c}]$-code \\
$R_{\rm c}=k_{\rm c}/n, \tilde{R}_{\rm c}=mR_{\rm c}$ & rate in bits/code symbol (bits/signal)\\
$R_{\rm T}=k/n, \tilde{R}_{\rm T}=mR_{\rm T}$  & target rate in bits/symbol and signal        \\
$n_{\rm sh}=k_{\rm a}+k_{\rm sh}$ &  length of the $(n_{\rm sh},2^{k_{\rm sh}})$ shaping code \\ 
$R_{\rm sh}=k_{\rm sh}/n_{\rm sh}$&  rate of shaping code   \\
$k_{\rm s}+r_{\rm s}=n_{\rm s}$             &  number of sign bits     \\
$k_{\rm a}+r_{\rm a}+k_{\rm sh}=n_{\rm s}(m-1)$    & number of amplitude bits \\ \hline
\end{tabular}
\end{table}

The coset shaping is determined by the code generator matrix in shaping-oriented form shown in Fig.~\ref{fig:shmatr} for the case $k_{\rm s}=0$ (no message bits are transmitted as sign bits). %\maiara{22.01}{I think Figures 2 and 3 are very similar. Can we simply explain how to get from one to another?}

\begin{figure}[!t]
\begin{center}
\includegraphics[width=90mm]{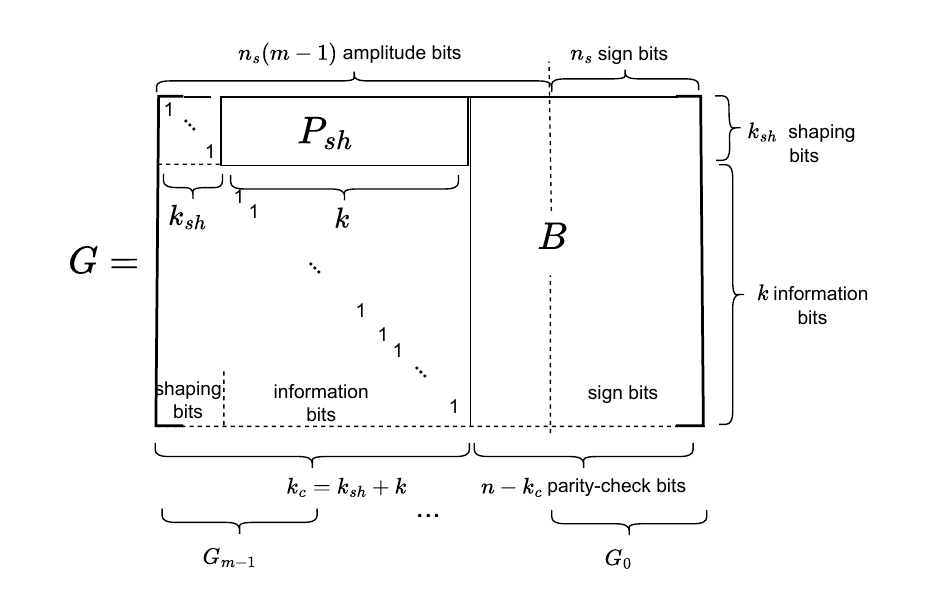}   
\caption{\label{fig:shmatr} Shaping-oriented form of the code generator matrix, $k_{\rm s}=0$.}
\end{center}
\end{figure}

The code design steps are:
\begin{enumerate}
 \item Start with a size $k_{\rm c} \times n$, $k_{\rm c}=k+k_{\rm sh}$ , $n=m n_{\rm s}$ generator matrix $\mat{G}$ given in the systematic form 
$ \begin{pNiceMatrix}
\mat{I}_{k_{\rm c}} ~ \mat{B}
\end{pNiceMatrix}$.
\item Choose a  $k_{\rm sh} \times (n_{\rm sh}=k_{\rm sh}+k)$ generator matrix of a shaping code $\mat{G}_{\rm sh}=\begin{pNiceMatrix}
 \mat{I}_{k_{\rm sh}} ~ \mat{P}_{\rm sh} 
 \end{pNiceMatrix}$ in the systematic form. Let  $k_{\rm s}=0$, $k=k_{\rm a}$.
 \item  
Reduce the first $k_{\rm sh}$ rows of $\mat{G}$ to the form shown in Fig.~\ref{fig:shmatr} by the linear combinations of rows of $\mat{G}$ which add up to $\mat{G}_{\rm sh}$.  This is always possible because of the first block of $\mat{G}$ being $\mat{I}_{k_c}$.  The remaining $n-n_{\rm sh}$ coordinates are also combined respectively. The resulting $k_{\rm sh} \times n$  submatrix $\mat{G}_l$ in the shaping-oriented form of the generator matrix is obtained as $\mat{G}_l \eqdef \mat{G}_{\rm sh}\mat{G} \in \Field_2^{k_{\rm sh} \times n}$. %The remaining $n-n_{\rm sh}$ coordinates are also combined, resulting in a block $\mat{G}_l \eqdef \mat{G}_{\rm sh} \times \mat{G}$. %\changed{$\mat{G}_{\rm sh} \times \mat{G}'$?}.

For long codes, the use of a generator matrix as in Fig.~\ref{fig:shmatr} is limited due to the high computational complexity of the corresponding shaping scheme.  A form of the generator matrix for a low-complexity version of the encoder in Fig.~\ref{fig:preshaping} is shown in Fig.~\ref{fig:lmatr} for $r_{\rm a}=0$ and $k_{\rm s}>0$. Instead of one $[n_{\rm sh},k_{\rm sh}]$ shaping code, the shaping scheme for a long $[n,k]$-code is based on using $N_{\rm b}=k_{\rm a}/k_{\rm sh}^{\rm b}$, $N_{\rm b} \in \Integers$, short shaping $[n_{\rm sh}^{\rm b},k_{\rm sh}^{\rm b}]$-codes. A parity-check matrix of the short shaping code is denoted as $\mat{G}_{\rm sh}^{\rm b}=\begin{pNiceMatrix} \mat{I}_{\rm sh}^{\rm b} ~ \mat{P}_{\rm sh}^{\rm b} \end{pNiceMatrix}$. This step is performed on blocks of length $n_{\rm sh}^{\rm b}$ bits.   

\begin{figure*}[!t]
\begin{center}
\includegraphics[width=90mm]{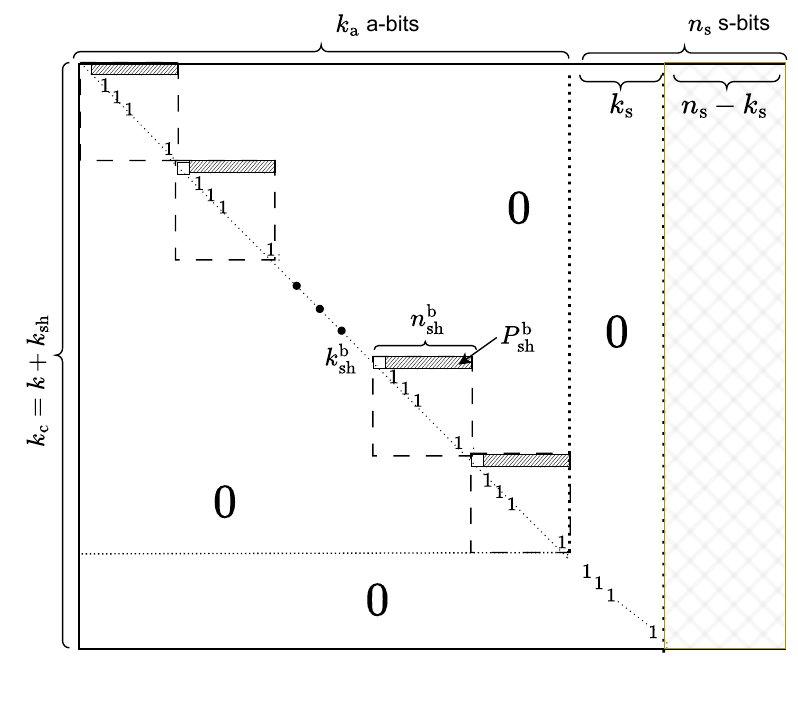}   
\caption{\label{fig:lmatr} Shaping-oriented form of the code generator matrix in example \ref{ex4} for a low-complexity version of the encoder, $r_{\rm a}=0$, $k_{\rm s}>0$.}
\end{center}
\end{figure*}

\end{enumerate}

We achieve a shaping gain at the cost of a slight reduction in the code rate $R_{\rm c}=\nicefrac{k_{\rm c}}{n}$. As before, we use $k$ to denote the message length. The code dimension is denoted by $k_{\rm c}$, and the bit cost of shaping is denoted as $k_{\rm sh}$, therefore, $k = k_{\rm c}-k_{\rm sh}$, where $k_{\rm sh}=N_{\rm b}k_{\rm sh}^{\rm b}$ for long codes. The code rate reduces to $R_{\rm T}=\nicefrac{k}{n}$ bits per binary code symbol or $\tilde{R}_{\rm T}=\nicefrac{k}{n_{\rm s}}$ bits per PAM signal. We will use a tilde to distinguish between rates per binary code symbol and per channel signal. 

Encoding with shaping includes the following steps:
\begin{enumerate}
%\item For a $k$-bit message $\bs m$  compute the preliminary (unshaped) codeword
%\[\tilde{\bs c }=\bs m\mat{G}. \]
\item For a given message $\bs u \in \Field_2^k$ and each possible combination of $k_{\rm sh}$ shaping bits $\bs u_{\rm sh}=(u_{{\rm sh},1},\dots , u_{{\rm sh},k_{\rm sh}})$ 
compute a candidate codeword
\begin{equation*}
{\bs v} (\bs u_{\rm sh}, \bs u)  = (\bs u_{\rm sh} ~ \bs u ) \mat{G}.  
\end{equation*}
\item Apply modulation $2^m$-PAM as described in Section \ref{sec:modulation-mapping} to each candidate $\bs u_{\rm sh}$, let 
$\bs s = \psi (\bs v(\bs u_{\rm sh}, \bs u))$ 
be the signal sequence corresponding to one of such candidates. We select the best candidate as one minimizing signal energy
\begin{equation*}
\hat{\bs u}_{\rm sh} =\arg\min_{\bs u_{\rm sh}} \|\psi( \vect{v}(\bs u_{\rm sh}, \bs u)))  \|^2. % \changed{\psi?}
\end{equation*}
Finally, the signal sequence for $\bs u$ is 
\begin{equation*}
\bs s= \psi ( \bs v( \hat{\bs u}_{\rm sh}, \bs u)). 
\end{equation*}

\end{enumerate}
Summarizing, the transmitted signal sequence is obtained by PAM mapping of one of the members of a coset of a linear code determined by $\mat G$. The coset element is determined by $\bs u$, and the leader is determined by  $\bs u_{\rm sh}$. By searching over coset leaders,  the message sequence with the smallest squared Euclidean norm is selected. 

Let $\bs y$ denote the channel output sequence, that is, $\bs y=\bs s+\bs e$, where $\bs e$ is a length $n$ Gaussian vector with independent components. Decoding is performed as follows:
\begin{enumerate}
\item Apply to $\bs y$ a decoder of the binary linear code determined by $\mat G$ 
in the same way as if no shaping is used. 
Denote by $\hat{\bs v}_1^{k_{\rm c}}=({\hat{\bs u}}_{\rm sh}, \hat{\bs v}_{k_{\rm sh}+1}^{k_{\rm c}})$ the 
length $k_{\rm c}=k+k_{\rm sh}$ estimated information part of the codeword.
%and by $\tilde{\bs v}$ the corresponding codeword, 
% $\hat{\bs v}_1^{k_{\rm c}}=\vect{v}(\tilde{\bs u}_{\rm sh}, \tilde{\bs u})$. 
% We use notation $\bs v_i^j=(v_i,...,v_j)$.
\item Recover the estimated coset leader 
% $\tilde{\bs v}_{\rm sh}$ as
% \[\tilde{\bs v}_{\rm sh}=\tilde{\bs u}_{\rm sh}G_l.\]
\begin{equation*}
\bs \xi= \hat{\bs u}_{\rm sh}\mat{G}_{\rm sh}.
\end{equation*}
\item The estimated information part of the codeword is \\ 
${\bs \zeta}=\hat{\bs v}_1^{k_{\rm c}}\oplus \bs \xi.$
% $
% \hat{\bs v}=\tilde{\bs v}\oplus \hat{\bs v}_{\rm sh}.
% $
The estimated message is 
% $\hat {\bs u}=\hat{\bs v}_{k_{\rm s}+1}^{k_{\rm s}+k}$.
\begin{equation*}
\hat{\bs u}={{\bs \zeta}}_{k_{\rm sh}+1}^{k_{\rm c}}.
\end{equation*}
\end{enumerate}

For the generator matrix as in Fig.~\ref{fig:lmatr}, the same encoding and decoding steps are performed by $N_{\rm b}$ blocks.    

\begin{example}
\label{ex2}
Consider the generator matrix~\eqref{shorient} of the [6,3]-code in Example \ref{ex1}.
Let $k=2$ and $k_{\rm sh}=1$,  $\mat{G}_{\rm sh}=\begin{pNiceMatrix}1&1&1\end{pNiceMatrix}$. Then 
\begin{equation*}
\mat{G}_l=\mat{G}_{\rm sh} \mat{G}% \changed{-> \mat{G}'?}
=\begin{pNiceMatrix}1&1&1&0&0&0  \end{pNiceMatrix},
\end{equation*}
and the shaping-oriented form of the generator matrix is
\vspace{0.2cm}
\begin{equation*}
\mat{G}= \begin{pNiceMatrix}[vlines={4}]
1 & 1 & 1 & 0 & 0 & 0 \\
\cline{1-6} 
0 & 1 & 0 & 1 & 0 & 1 \\
0 & 0 & 1 & 1 & 1 & 0
\end{pNiceMatrix}   . 
\end{equation*}

The matrix $\mat{G}_l$ is used for constructing two cosets of the rate $\nicefrac{2}{6}$ error-correcting code determined by $\mat G$. The corresponding two sets of the 64-QAM signal points are shown in Figs.  \ref{fig:SigSet}A and \ref{fig:SigSet}B. Their average energy is equal to 23 and 19, respectively. 
For each of the four messages, among the two possibilities, a minimum-energy representation from one of two signal sets is selected as shown in Fig.~\ref{fig:SigSet} C. This results in $E=9$. The four minimum-energy signal points that choose sphere shaper \cite{gultekin2020probabilistic} are shown in Fig.~\ref{fig:SigSet} D. This shaper provides $E=9$ too. Notice that the average energy of 8-PAM signals without shaping is equal to 21.

The parameters of this scheme in the notation of Fig.~\ref{fig:preshaping} are $n_{\rm s}=2$, $R_{\rm T}=\nicefrac{1}{3}$, $\tilde{R}_{\rm T}=1$, $R_{\rm c}=\nicefrac{1}{2}$, $R_{\rm sh}=\nicefrac{1}{3}$, $k_{\rm a}=2$, $k_{\rm s}=0$, $r_{\rm a}=1$, $r_{\rm s}=2$. Mapping of a codeword to PAM signals is illustrated in Fig.~\ref{fig:TwoEx}a. \hfill \exampleend
\end{example}  
 
In Example~\ref{ex2}, one of the a-bits is a redundant bit of a codeword. Shaping for parity bits in the framework of DM shaping is discussed in \cite{bocherer2019probabilistic,lentner2025efficient}. In the case of our coset shaping, extending it to parity bits does not require any modification to the scheme.    

\begin{figure}[!t]
\begin{center}
\includegraphics[width=85mm]{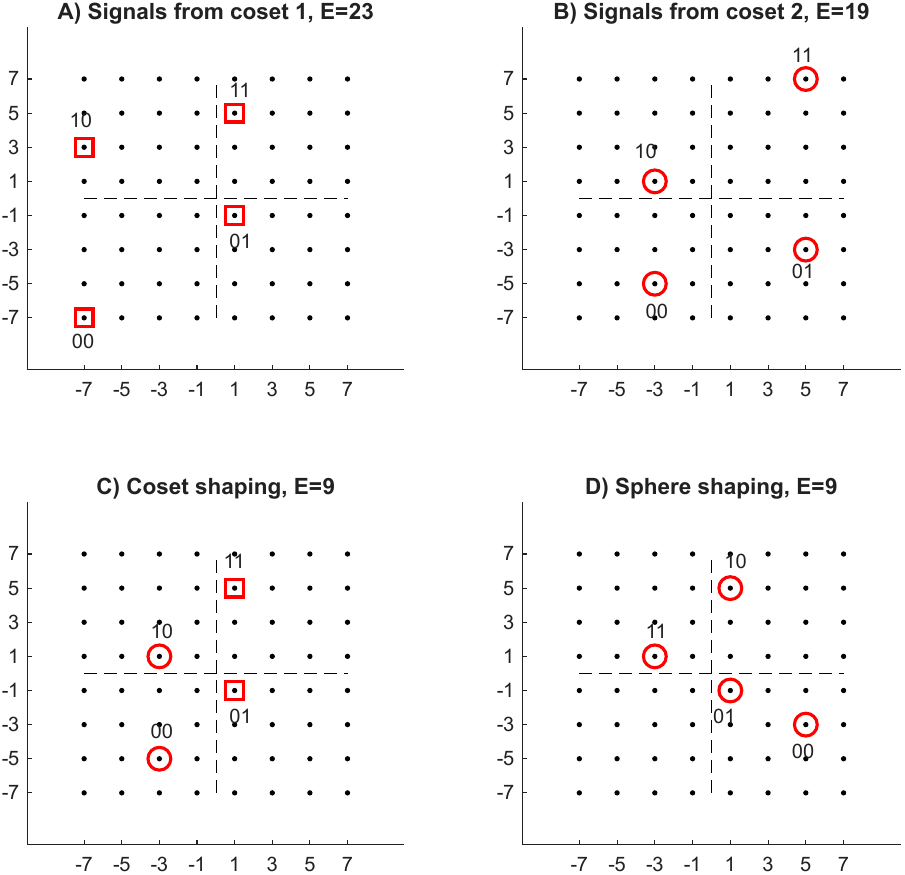}
\caption{\label{fig:SigSet} Two-dimensional shaped Gray-coded modulation. 
The sixty-four black dots show 64-QAM points.
The eight signal points corresponding to the eight codewords are represented by two coset images, one by red squares in Fig.~A, and the other by red circles in Fig.~B. Each signal point is labeled by the message bits. The non-linear sphere shaper in Fig.~D chooses four minimum-energy points among all eight signal points. }
\end{center}
\end{figure}

\begin{figure}[!t]
\begin{center}
\includegraphics[width=85mm]{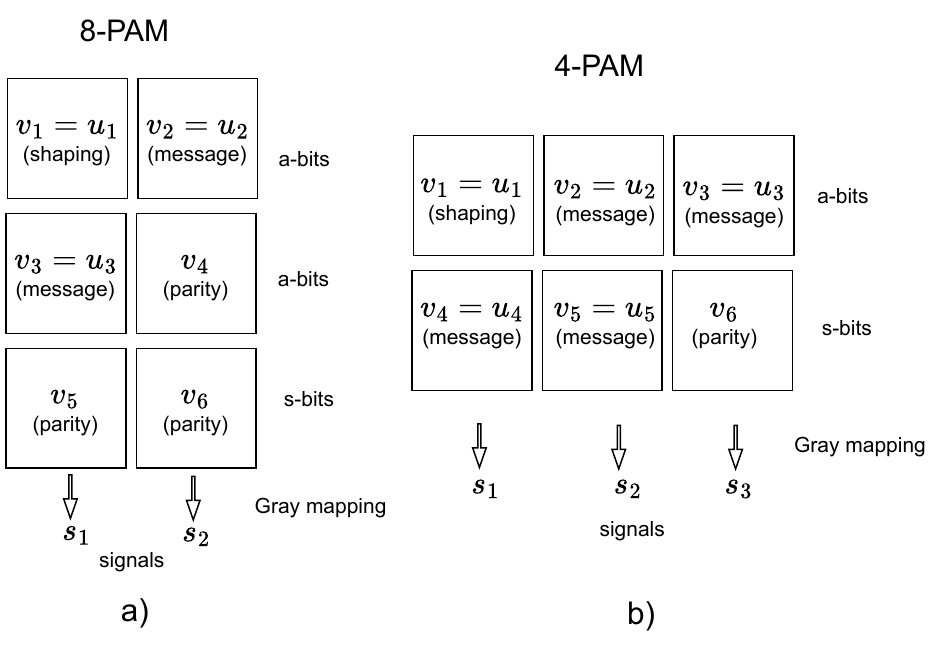}
\caption{\label{fig:TwoEx} Mapping of codewords to PAM signals: a) Example~\ref{ex2}, b) Example~\ref{ex3}.}
\end{center}
\end{figure}

For higher transmission rates, message bits are mapped to sign bits, as demonstrated in the following example.  
\begin{example}
\label{ex3}
Consider 4-PAM transmission, $R_{\rm T}=\nicefrac{2}{3}$, and the generator matrix~\eqref{shorient} of the $[6,5]$-code 
in %shaping-oriented 
the form 
\begin{equation*}
\mat{G}= \begin{pNiceMatrix}[vlines={6}]
1& 1& 1& 0& 0& 1 \\
\cline{1-6} 
0& 1& 0& 0& 0& 1 \\
0& 0& 1& 0& 0& 1 \\
0& 0& 0& 1& 0& 1 \\
0& 0& 0& 0& 1& 1
\end{pNiceMatrix}.
\end{equation*}
Parameters of the scheme: $n_{\rm s}=3$, $R_{\rm c}=\nicefrac{5}{6}$, $R_{\rm sh}=\nicefrac{1}{3}$, 
$k_{\rm a}=2$, $k_{\rm s}=2$, $k_{\rm sh}=1$, $r_{\rm a}=0$, $r_{\rm s}=1$.
The average energy per signal of the shaped sequences is equal to 3, while the average energy of 4-PAM signals is equal to 5. Mapping of a codeword to PAM signals is shown in Fig.~\ref{fig:TwoEx}b. \hfill \exampleend
\end{example}

The next example illustrates the parameters of the shaping scheme for the low-complexity version of the encoder.

\begin{example}
\label{ex4}
Consider 8-PAM (64-QAM) transmission, $R_{\rm T}=\nicefrac{2}{3}$, $\tilde{R}_{\rm T}=2$. Following shaping oriented form and notation from Fig.~\ref{fig:lmatr}, we choose the $[n,k_{\rm c}]$ ECC with $n=648$ and $k_{\rm c}=486$. $N_{\rm b}=6$ shaping $[n_{\rm sh}, k_{\rm sh}]=[72,9]$-codes are used in order to shape $k_{\rm a}=432$ amplitude bits. Sign bits are formed by $k_{\rm s}=54$ message bits and $r_{\rm s}=n-k_{\rm c}=n_{\rm s}-k_{\rm s}=162$ check bits of the ECC. The first $\nicefrac{k_{\rm a}}{2}=216$ bits are mapped to the first amplitude bits of the 8-PAM signals, and the next $\nicefrac{k_{\rm a}}{2}=216$ bits are mapped into the second amplitude bits. The $n_{\rm s}=216$ bits form signs of the modulation signals.  Mapping of a codeword to 8-PAM signals is shown in Fig.~\ref{fig:TwoEx}a.

The average energy per signal of the shaped sequences in this scheme is equal to
$E=12.15$, while the average energy of 8-PAM signals is equal to 21. \hfill \exampleend
\end{example}

\section{Asymptotic analysis  
\label{analysis}}    

We analyze next a random shaping scheme for coded PAM signaling over the AWGN channel with noise variance~$\sigma^2$. First, we will establish the notation and assumptions. 

\begin{definition}
 An ensemble of coset codes 
 \begin{equation}\label{eq:ensemble-coset-codes}
\set{C}_{\rm c}\eqdef\{\code{C}_{\rm c}=\code{C} \oplus \vect{a}: \code{C} \in \set{C}, \vect{a} \in \Field_2^n \},   
\end{equation}
  is a set of cosets of $[n,k_{\rm c}]$ uniformly random binary linear codes $\code{C}$, where the codes are generated by matrices selected uniformly from the set $\set{C}$ of $k_{\rm c} \times n$ matrices over $\Field_2$, and shift $\vect{a}$ is chosen at random from $\Field_2^n$.     
\end{definition}
The random shift $\bs a$ is introduced to provide uniform distribution over codewords. This property will be later used in Proposition~\ref{pro:2}.

Consider a random shaping scheme with a $[n,k_{\rm c}]$ coset code~$\code{C}_{\rm c}$ and its $[n,k_{\rm sh}]$ shaping subcode $\code{C}_{\rm sh} \subset \code{C}_{\rm c}$, with $k_{\rm sh}=k_{\rm c}-k'$. We call the pair $(\code{C}_{\rm c},\code{C}_{\rm sh})$ {\em shaping construction}. Elements of the construction are $2^{k_{\rm sh}}$ subsets of $\code{C}_{\rm c}$, 
each subset contains $2^{k'}$ codewords from $\code{C}_{\rm c}$. In the previous examples $k'=k$, but in the asymptotic analysis, we assume $k'< k$. The rate loss $\nicefrac{(k-k')}{n}\to 0$ if $n\to \infty$. 

The generator matrix $\mat{G}_{\rm sh}$ of $\code{C}_{\rm sh}$ is a random subset of $k_{\rm sh}$ rows of the generator matrix $\mat{G}$ of $\code{C}$. The remaining part of the matrix, $\tilde{\mat{G}}$, is of size $k'\times n$. In Fig.~\ref{fig:shmatr}, $k'=k$, the top $k_{\rm sh}$ and bottom $k$ rows are the matrices $\mat{G}_{\rm sh}$ and $\tilde{\mat{G}}$, respectively.

Recall $\psi: \Field_2^n \to \set{X}^{n_{\rm s}}$ as in Section~\ref{sec:modulation-mapping}, where $m$ is divisor of $n$. Denote by  $\mathcal R \subseteq \set{X}^{n_{\rm s}} = \set{X}^{\nicefrac{n}{m}}$ a preconstructed {\em shaped set}, $|\mathcal R|=2^k$. In the asymptotic analysis, it is assumed that only signal sequences in  $\mathcal R \subseteq \set{X}^{n_{\rm s}}$ are used for data transmission. In other words, the message $\bs u \in \Field_2^{k'}$ is first  encoded as $\vect{v}=\bs u \tilde{\mat{G}}$, and then $\bs s(\bs u) \eqdef \psi (\bs v \oplus \bs v_{\rm sh} \oplus \bs a)$,
where $\bs a$ is the shift as in~\eqref{eq:ensemble-coset-codes}, $\bs v \oplus \bs v_{\rm sh} \in \code{C}$, and $\bs v_{\rm sh}$ is one of $2^{k_{\rm sh}}$ coset leaders used to transmit the message $\bs u$ over the channel, iff $\bs s(\bs u) \in \mathcal R$. If none of the $2^{k_{\rm sh}}$ signal sequences are in $\mathcal R$, an arbitrary sequence of $\mathcal R$ is transmitted, and we get an error.

We introduce a set of errors 
\begin{equation*}
{\rm E}_0 \eqdef {\rm S}_{n_{\rm s}}(\bs x,\sqrt{n_{\rm s} (\sigma^2+\delta))},
\end{equation*}
where ${\rm S}_{n}(\bs x,\rho)$ denotes the $n$-dimensional sphere of radius $\rho$ around the point $\bs x$, $\sigma$ is the AWGN standard deviation, $\delta>0$. Notice that the center $\bs x$ in ${\rm S}_{n}(\bs x,\rho)$ not necessarily coincides with some signal point in $\set{X}^{n_{\rm s}}$. In the analysis below, instead of searching for the signal sequence $\bs s$ closest to the channel output, we apply the following decoding rule: \emph{the decision $\bs u$ is made if for a unique $\bs u$ the corresponding signal
$\bs s (\bs u) \in \bs y+{\rm E}_0$. Otherwise, an error event is declared.}

The random shaping scheme consists of two random ensembles:
the ensemble of coset codes $\set{C}_{\rm c}$ as in~(\ref{eq:ensemble-coset-codes}) and the 
set $\set{P}$ of random choices of $k_{\rm sh}$ rows among $k_{\rm c}$ rows of the generator matrix $\mat{G}$ of $\code{C}_{\rm c}$. The error probability is estimated as an average error probability on the product of two ensembles ${\set{C}_{\rm c}} \times\set{P}$. An element of ${\set{C}_{\rm c}} \times\set{P}$ is a shaping construction $(\code{C}_{\rm c},\code{C}_{\rm sh})$. It determines 
$2^{k_{\rm sh}}$ subsets each of size~$2^{k'}$. %codewords from $\code{C}$.

We assume that with high probability
 channel output sequences belong to 
\begin{equation}\label{eq:Lambda}
\Lambda=\Lambda(\mathcal R, {\rm E}_0)=\{\bs s+\bs e: \bs s \in \mathcal{R},\bs e\in {\rm E}_0 \}.
\end{equation}

\begin{definition} \label{def:NSM}
We define the normalized second moment (NSM) $G(\Lambda_{\rm a})$ of 
an arbitrary region $\Lambda_{\rm a} \subseteq \mathbb R^{n_{\rm s}}$ as follows
\begin{equation}
G(\Lambda_{\rm a}) = 
\frac{
\frac{1}{n_{\rm s}}\int_{\bs x \in\Lambda_{\rm a}} \|\bs x\|^2 d\bs x
}{\vol{\Lambda_{\rm a}}^{1+\nicefrac{2}{n_{\rm s}}}}.
\end{equation}
\end{definition}

In~\cite{bocharova2025coset}, we have formulated and outlined a draft of the proof for the following theorem, which is now presented in further detail.

\begin{theorem} \label{th}
In the product ensemble $\set{C}_{\rm c} \times \set{P}$, for large enough $m$ ($=\log_2 M$) and $n$ there exists a shaping construction, 
%pair composed by a coset and a shaping code, $({\code{C}_{\rm c}},\code{C}_{sh})$ % $\Lambda_n(C_{sh)}$
such that an arbitrarily small error probability in AWGN channel with parameters $(P,\sigma^2)$ can be achieved if the code rate per signal dimension $\tilde{R}=\nicefrac{k}{n_{\rm s}}$ satisfies
\begin{equation}\label{eq:theorem}
%\tilde{R}_{\rm T} =
\tilde{R}\le  \frac{1}{2} \log_2\left( 1+ \frac{P}{\sigma^2} \right)
- \log_2\left(2\pi e ~ G(\Lambda)\right) -o(n), 
\end{equation}
where $G(\Lambda)$ denotes the normalized second moment (NSM) of 
$\Lambda\eqdef\Lambda(\mathcal R, {\rm E}_0)$, and $o(n)\to 0$ when $n\to \infty$.
\end{theorem}

In the following, we give an extended proof of the theorem. 

Although we are considering a linear ECC, the coded modulation signal constellation is nonlinear. First, we will show that in the ensemble  $\set{S}$ of signal points obtained from the ensemble of coset codes ${\set{C}_{\rm c}}$, the probability of any subset is determined only by the number of points in it.

\begin{proposition} \label{pro:2}
The  ensemble of $[n,k_{\rm c}]$-coset codes ${\set{C}_{\rm c}}$  determines the ensemble $\set{S}$ of signal points $\bs s \in \set{X}^{n_{\rm s}}$, where each~$\bs s$ has probability $M^{-n_{\rm s}}$, $M=2^m$.   
\end{proposition}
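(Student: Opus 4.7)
The plan is to show that, for any fixed information vector $\bs u$, the PAM-image of the corresponding codeword of a random coset code $\code{C}_{\textnormal{c}} \in \set{C}_{\textnormal{c}}$ is distributed uniformly over $\set{X}^{n_s}$. Every codeword of a random $\code{C}_{\textnormal{c}} = \code{C} \oplus \bs a$ has the form $\bs v = \bs u \mat{G} \oplus \bs a$, where $\mat{G}$ is the random $k_{\textnormal{c}} \times n$ generator matrix of $\code{C}$ and $\bs a \in \Field_2^n$ is the random shift, drawn uniformly on $\Field_2^n$ and independently of $\mat{G}$ by construction of the ensemble in~\eqref{eq:ensemble-coset-codes}.

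The key step is then a one-time-pad style observation: for any $\bs u$ and any target $\bs v_0 \in \Field_2^n$,
\begin{IEEEeqnarray*}{c}
\Pr\!\left[\bs u \mat{G} \oplus \bs a = \bs v_0 \,\middle|\, \mat{G}\right] = \Pr_{\bs a}\!\left[\bs a = \bs v_0 \oplus \bs u \mat{G}\right] = 2^{-n},
\end{IEEEeqnarray*}
independently of the realization of $\mat{G}$; taking expectation over $\mat{G}$ therefore leaves this probability unchanged, so $\bs v$ is marginally uniform on $\Field_2^n$. Since $\psi: \Field_2^n \to \set{X}^{n_s}$ is a bijection, by the size match $|\Field_2^n| = 2^n = 2^{m n_s} = M^{n_s} = |\set{X}^{n_s}|$, the random signal point $\bs s = \psi(\bs v)$ is uniform over $\set{X}^{n_s}$, with each $\bs s$ occurring with probability $M^{-n_s}$, which is exactly the claim.

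I do not anticipate a serious technical obstacle, as the argument reduces to the fact that a uniformly random shift erases any structure of the underlying linear code when viewed one codeword at a time. The only delicate point is pinning down a precise interpretation of the ensemble $\set{S}$: reading it as the distribution of the signal point associated to a given message $\bs u$ over the joint randomness of $(\mat{G},\bs a)$ makes the proof a direct consequence of the uniformity and independence of $\bs a$, and this same independence is what will later allow the pairwise (and higher-order) statistics over $\set{S}$ to be controlled as foreshadowed in the paragraph preceding the proposition.
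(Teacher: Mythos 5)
Your proof is correct and follows essentially the same route as the paper: establish that for a fixed message the codeword $\bs u\mat{G}\oplus\bs a$ is uniform on $\Field_2^n$ (the paper does this by counting the $2^{nk_{\textnormal{c}}}$ pairs $(\mat{G},\bs a)$ mapping to a given target out of $2^{n(k_{\textnormal{c}}+1)}$ total, which is just the counting form of your one-time-pad conditioning on $\mat{G}$), and then transport uniformity through the bijection $\psi$. Your version is, if anything, slightly cleaner in isolating that the random shift $\bs a$ alone suffices for this marginal statement.
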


\begin{proof} 
To verify the proposition statement, we follow the arguments in \cite{gallager1968information}. We first show that in the ensemble ${\set{C}_{\rm c}}$, where symbols 0 and 1 of the  $k_{\rm c} \times n$  generator matrices are selected uniformly at random, the probability of any codeword is $2^{-n}$.

There exist $2^{n(k_{\rm c}+1)}$ choices of the generator matrix $\mat{G}$ and  shift $\bs a$. Each pair has probability $2^{-n(k_{\rm c}+1)}$. Then the probability of a given codeword can be computed as
\[2^{nk_{\rm c}}2^{-n(k_{\rm c}+1)}=2^{-n}.\]

Next, we consider an ensemble of signal points obtained from codewords in ${\set{C}_{\rm c}}$. Given that the map $\psi$ is a bijection on each block of $m$ bits we conclude that each signal point $\bs s$ has probability $\left(2^{m}\right)^{-n/m}$=$M^{-n_{\rm s}}$. 
\end{proof}

\begin{lemma}[Averaging Lemma] \label{lemma1} %
In the ensemble of signal sets
$\set{S}=\{\code{S} \eqdef \psi(\code{C}_{\rm c}): \code{C}_{\rm c} \in \set{C}_{\rm c} \}$, where $\set{C}_{\rm c}$ is an ensemble of $[n,k_{\rm c}]$ coset binary linear codes, for any set 
$\set{W} \subseteq\code{S}$,
%$E$, 
the average probability over code images $\code{S}$ is
% \begin{equation}
% \label{eq:probE}
%     \overline{\Pr({\code{S}}\cap E)} \le \min \{1, 2^{k_{\rm c}-n}{|E|}\}
% \end{equation} 
% and the probability 
\begin{equation}
\label{eq:probW}
    \overline{\Pr({\code{S}}\cap \set{\set{W}}| % \code{S},
     \code{S} )} = 
     \frac{|\set{W}|}{|\code{S}|}.
   %  2^{-k_{\rm c}}{|\set{W}|},
\end{equation}
% if the set $\set{W} subseteq\code{S}$.
\end{lemma}

\begin{proof}
% Let $E \subseteq \set{X}^{n_{\rm s}}$. Then, for $\code{S}$ in the ensemble
% \begin{equation} \label{eq6}
% \Pr({\code{S}}\cap E) \le \sum_{\bs s\in E}\Pr(\bs s\in \code{S}).
% \end{equation}
From Proposition \ref{pro:2} it follows that 
%the expectation of this probability over the ensemble $\set{C}_{\rm c}$ is  $\overline{\Pr(\bs s\in \code{S})}=|\code{S}|2^{-n}$. Now, from 
%$|\code{S}|\le 2^{k_{\rm c}}$ and \eqref{eq6} follows \eqref{eq:probE}.  
%To prove \eqref{eq:probW}, notice that 
for two sets $\set{A}$, $\set{B}$, such that $\set{A} \subseteq \set{B}$, the elements of $\set{A}$, $\set{B}$ are uniformly distributed in the same space. Since $\set{A} \subseteq \set{B}$ then {$\Pr(\set{A} \cap \set{B})=\Pr(\set{A})$}. We obtain
\begin{equation*}
\Pr(\set{A}|\set{B})=\frac{\Pr(\set{A}\cap \set{B})}{\Pr(\set{B})}= \frac{\Pr(\set{A})}{\Pr(\set{B})}=\frac{|\set{A}|}{|\set{B}|}.
\end{equation*}
This yields~\eqref{eq:probW} and completes the proof.
%Thus, it follows from \eqref{eq:probW} that
% \begin{equation*}
% \overline{\Pr({\code{S}}\cap \set{W}| %\code{S},
%  \code{S} )}=\frac{|\set{W}|}{|\code{S}|}. \qedhere
% \end{equation*}
\end{proof}

All sets in Lemma~\ref{lemma1} are subsets of $\set{X}^{n_{\rm s}}$, where elements of the discrete set $\set{X}$ are odd integers from $-2^m+1$ to $2^m-1$.    
We need a similar statement for sets of real-valued channel output sequences $\bs y \in \set{T}$,  
$\set{T} \subseteq \mathbb{R}^{n_{\rm s}}$,  
not necessarily for sets of signal points.  For our purposes, it is sufficient to consider a special case when $\set{T}$ is a sphere.

\begin{remark} \label{sphere}
For a real $\alpha>0$, let ${\rm vol}({\rm S}_{n_{\rm s}}(\bs x,\alpha n_{\rm s}))$ denote the volume of 
the $n_{\rm s}$-dimensional sphere of radius $\alpha n_{\rm s}$ around its center~$\bs x$. 
Denote by 
\begin{equation*}
N_{n_{\rm s}}(\bs x,\alpha n_{\rm s})=\left|{\rm S}_{n_{\rm s}}(\bs x,\alpha n_{\rm s}) \medcap \set{X}^{n_{\rm s}}\right|
\end{equation*}
the number of integer-valued signal points inside this sphere. 
The number of points depends on the location of $\bs x$ with respect to the nearest signal point. The influence of this factor can be reduced by considering high-order modulation, which makes the distance between signal points much smaller than the radius $\alpha n_{\rm s}$ of the sphere.
From \cite[eq.(1.3)]{mazo1990lattice} for large enough modulation order $M=2^m$ follows that for $n_{\rm s}\to \infty$
\begin{equation} \label{eq:s1}
N_{n_{\rm s}}(\bs x,\alpha n_{\rm s})\approx{\rm vol}\left({\rm S}_{n_{\rm s}}(\bs x,\alpha n_{\rm s})\right)2^{-n_{\rm s}}.
%\approx
%\frac{(\pi e \alpha/2)^{n_{\rm s}/2}}{\sqrt{\pi n_{\rm s}}}.
\end{equation}
%From \eqref{eq:s1} we conclude
%that if the distance between signal constellation points is small %compared to the size of a region $\Lambda$, then the number of %integer points inside $\Lambda$ is approximately equal to the volume %of $\Lambda$.  \maiara{22.01}{Is $\Lambda$ as %in~\mbox{\eqref{eq:Lambda}}?}
\end{remark}

In order to estimate the average error probability in the product of two ensembles ${\set{C}_{\rm c}} \times\set{P}$, we formulate the error events which occur for a given set of errors ${\rm E}_0$, a random error vector $\bs e$, and a random message $\bs u$. The error is declared if at least one of the three events occurs (see Fig.~\ref{fig:3errors}): %\maiara{25.09}{I like the fancy $\mathcal{E}$ :)}
\begin{IEEEeqnarray*}{rCl}
\mathcal E_\alpha:&  \hspace{5mm}   & \bs e \notin {\rm E}_0 \\
\mathcal E_\beta: & \hspace{5mm} & \mathcal R \medcap \big(\displaystyle\medcup_{\bs v_{\rm sh} \in \code{C}_{\rm sh}} \psi(\bs v \oplus \bs v_{\rm sh} \oplus \bs a) \big)  = \varnothing, \\
% \mathcal E_\beta:& \hspace{5mm} & \mathcal R \bigcap 
% \bigcup_{\bs v_{\rm sh} \in \code{C}_{\rm sh}}
% \psi(\bs u \tilde{\mat{G}} \oplus \bs a \oplus \bs v_{\rm sh} )  = \varnothing, \\
%\psi(\bs u \tilde{\mat{G}} \oplus \code{C}_{sh} )  = \emptyset \\
\mathcal E_\gamma:&  \hspace{5mm}  &
\bs s(\bs u') \in \bs y + {\rm E}_0, \text{ for some } \bs u'\neq \bs u,
\end{IEEEeqnarray*}
where $\bs y$ is the channel output sequence.

\begin{figure}[!t]
    \centering
    \includegraphics[width=0.8\linewidth]{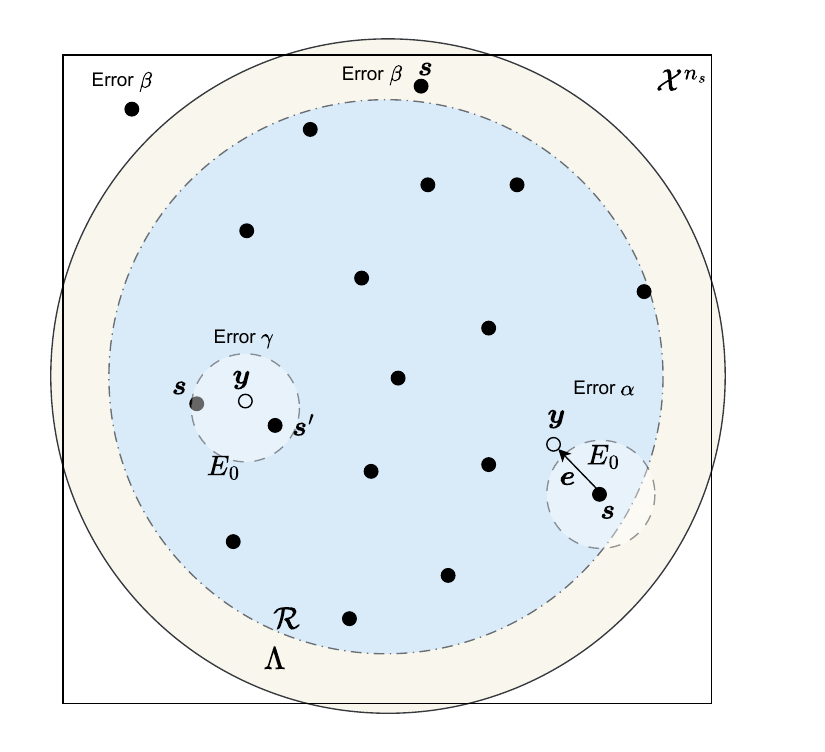}
    \caption{Sets $\set{X}^{n_{\rm s}}$, $\Lambda$, $\mathcal{R}$ and error events $\alpha$, $\beta$, and $\gamma$.}
    \label{fig:3errors}
\end{figure}

In other words, the error is declared in case a non-typical noise vector occurs, or if the encoded signal sequence does not satisfy energy restriction, or if another signal sequence is present in the vicinity of the received sequence $\bs y$. The error event $\mathcal E_\alpha$ depends only on the channel, $\mathcal E_\beta$ depends only on the code, $\mathcal E_\gamma$ depends on both code and channel. See Fig.~\ref{fig:3errors}.

%In order to estimate the probability of $\mathcal E_\alpha$, let the set $E_0\eqdef {\rm S}_{n_{\rm s}}(\bs 0,\sqrt{n_{\rm s}(\sigma+\delta)})$,  where ${\rm S}_{n}(\bs x,\rho)$ denotes the $n$-dimensional sphere of radius $\rho$ around the point $\bs x$ and $\sigma$ is the AWGN standard deviation. 

Then, for any $\delta>0$ and any $\epsilon>0$ if $n_{\rm s}$ is large enough, by using Chebyshev's inequality, we obtain  
\begin{align*}
\overline{\Pr(\mathcal E_\alpha) } & ={\Pr(\|\bs e\|^2 \ge n_{\rm s}(\sigma^2+\delta))}\\
& = \Pr(\|\bs e \|^2-n_{\rm s}\sigma^2 \ge %\changed{\le ?} 
n_{\rm s}\delta) \le \frac{1}{n_{\rm s}^2\delta^2} <\epsilon,
\end{align*}
where the bar over $\Pr(\cdot)$ denotes the average over the ensemble $\set{C}_{\rm c} \times\set{P}$. 
% We \cancel{assume} \boris{}{conclude} that 
The probability ${\Pr}(\mathcal E_\alpha)$ does not depend on the choice of the code.

The probability of error events $\mathcal E_\beta$ and $\mathcal E_\gamma$ are random functions of the codes from 
${\set{C}_{\rm c}} \times\set{P}$.

Let $\mathcal{R} \subseteq \set X^{n_{\rm s}}$ of size  $|\mathcal{R}|=2^k$  be a shaping set with average energy
per signal 
$
P=\tfrac{1}{n_{\rm s}}\sum_{\bs s\in \mathcal{R}} \|\bs s\|^2.
$
%The pair $(\code{C}_{\rm c},\code{C}_{sh})$ we call {\em shaping construction}. 
The probability of the event $\mathcal E_\beta$ follows from Lemma~\ref{lemma2} below.

 \begin{lemma} \label{lemma2}
In the product ensemble ${\set{C}_{\rm c}} \times\set{P}$ of shaping constructions 
for a given shaping region
$\mathcal R \subseteq \set{X}^{n_{\rm s}}$ of size $2^{k}$, there exists a shaping construction  $(\code{C}_{\rm c},\code{C}_{\rm sh})$ such that for any $\bs u {\in \Field_2^k}$, at least one out of the $2^{k_{\rm sh}}$ images $\bs s(\bs u)=\psi(\bs v \oplus \bs a \oplus \bs v_{\rm sh}) $ belongs to  $\mathcal{R}$ with probability arbitrarily close to 1 
{when $n_{\rm s} \to \infty$}.
%\maiara{22.01}{This seems like a weaker result than the one we had in the conference paper.}
 \end{lemma}

\begin{proof} 
For any ${\bs v \oplus \bs a} \in \code{C}_{\rm c}$ we compute the probability that for all $\bs v_{\rm sh} \in \code{C}_{\rm sh}$ the image $\psi(\bs v {\oplus} \bs a \oplus \bs v_{\rm sh}) \notin \mathcal R$.  
Let $p$ denote the probability of success, i.e., $\psi(\bs v \oplus \bs a {\oplus} \bs v_{\rm sh}) \in \mathcal R$ for one randomly chosen $\bs v_{\rm sh}$. From~\eqref{eq:probW} in Lemma~\ref{lemma1},
\begin{equation*}
p=2^{-k_{\rm c}}|\mathcal R| = 2^{k-k_{\rm c}}.   
\end{equation*}
 %{\color{blue} 
Let for $i$-th attempt a random variable $\xi_i$ takes value 1 with probability $p$ and value 0 with probability $1-p$. In the ensemble of random linear codes, codewords are pairwise independent \cite{gallager1968information}.
It means that 
$\Pr(\xi_i,\xi_j)=\Pr(\xi_i)\Pr(\xi_j)$.  Therefore,
\begin{align*}
\overline{\xi_i}=p, \;\; {\rm Var}(\xi_i)=p(1-p),
\;\; \\
{\rm Var}\left(  \frac{1}{2^{k_{\rm sh}}} \sum_{i=1}^{2^{k_{\rm sh}}} \xi_i
\right)=\frac{p(1-p)}{2^{k_{\rm sh}}}.
\end{align*}
Then, for pairwise independent attempts, the probability of the error event $\mathcal E_\beta$ can be estimated as
\begin{align*}
\Pr\left ( \sum_{i=1}^{2^{k_{\rm sh}}} \xi_i= 0 \right) 
&\le
{\Pr}\left (\frac{1}{2^{k_{\rm sh}}} \sum_{i=1}^{2^{k_{\rm sh}}} \xi_i\le 0\right) \\
&\le
\Pr\left (\left|\frac{1}{2^{k_{\rm sh}}} \sum_{i=1}^{2^{k_{\rm sh}}} \xi_i-p\right|\le p\right)\\
&\le
\frac{{\rm Var}(\xi_i)}{2^{k_{\rm sh}} p^2}=\frac{1-p}{2^{k_{\rm sh}}p}<\frac{1}{2^{k_{\rm sh}}p} .
\end{align*} 
Since $2^{k_{\rm sh}}p=2^{k_{\rm sh}}2^{k-k_{\rm c}}=2^{k-k'}$ this probability tends to 0 if $k\to \infty$,  and $k'< k$. 
\end{proof} 

The next step is to estimate the probability of the error event~$\mathcal E_\gamma$. 
The $\mathcal E_\gamma$ happens if some $\bs s=\psi(\bs v)$ is transmitted and the image of $\bs s'=\psi(\bs v')\in \Lambda=\Lambda(\mathcal R, {\rm E}_0)$ as in~\eqref{eq:Lambda}, $\bs v'\neq \bs v$,  
belongs to the noise sphere around $\bs y$. % set $\bs s+ E_0$.
Using averaging arguments from Lemma~\ref{lemma1} and Remark~\ref{sphere}, we can write
{%\color{blue}
\begin{eqnarray*}
\overline{\Pr(\mathcal E_\gamma,\bs s)} 
&=&\Pr(\bs s'\in {\rm E}_0 \medcap \Lambda \big| \bs s' \in \Lambda)=\frac{{\rm vol}({\rm E}_0)}{\rm vol (\Lambda)},
\end{eqnarray*}
% \\
% &=&\Pr(\bs s'\in {\rm S}_{ \bs s' \in \Lambda)n_{\rm s}}(\bs y ,\sqrt{n_{\rm s}(\sigma^2+\delta)}| \bs s' \in \Lambda))
% )
where ${\rm E}_0= {\rm S}_{n_{\rm s}}(\bs y ,\sqrt{n_{\rm s}(\sigma^2+\delta)})
)$ and $\Lambda$ is defined in~(\ref{eq:Lambda}).}

% \begin{equation*}
% \le 
% \frac{\Pr(\bs s'\in {\rm S}_{ \bs s' \in \Lambda)n_{\rm s}}(\bs y ,\sqrt{n_{\rm s}(\sigma^2+\delta)})
% )}{\Pr(\bs s' \in\Lambda)}
% \approx \frac{{\rm vol}(E_0)}{\rm vol (\Lambda)}.
% \end{equation*}

Applying the union bound, we obtain 
\begin{equation}\label {eq:e_Gamma}
\overline{\Pr(\mathcal E_\gamma)} \le 
2^k\frac{{\rm vol}({\rm E}_0)}{\rm vol (\Lambda)}.
\end{equation}

\begin{lemma} 
The average energy of the channel output sequences from the shaped region $\mathcal R$ per signal is 
\begin{IEEEeqnarray}{c} \label{eq:P}
P(\Lambda)\eqdef
 \frac{1}{n_{\rm s}}{\rm Var} (\Lambda)= P+\sigma^2.
\end{IEEEeqnarray}
\end{lemma}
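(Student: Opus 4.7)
The plan is to identify the distribution on $\Lambda$ arising from the channel output $\bs y = \bs s + \bs e$, and then exploit the independence of signal and noise. Specifically, I would let $\bs s$ be uniform over $\mathcal R$ and $\bs e$ be the AWGN noise, so that $\bs y$ lies in $\Lambda$ with probability tending to $1$ (sphere hardening of $\|\bs e\|^2$ around $n_s\sigma^2$ by Chebyshev, exactly as was used when bounding $P(\mathcal E_\alpha)$). Under this identification, ${\rm Var}(\Lambda)$ from Definition~\ref{def:NSM} equals the trace of the covariance matrix of $\bs y$ up to a vanishing correction controlled by the slack $\delta$ in the radius of $E_0$.

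Next I would check that $E[\bs y] = \bs 0$, using the symmetry of the PAM alphabet $\set X = \{\pm 1,\pm 3,\dots,\pm(2^m-1)\}$ together with the random shift $\bs a$ of the coset ensemble (\ref{eq:ensemble-coset-codes}), which symmetrises the induced distribution of $\bs s$; the noise $\bs e$ is zero-mean by assumption. Independence of $\bs s$ and $\bs e$ together with the vanishing means kills the cross term and gives
\begin{equation*}
E[\|\bs y\|^2] = E[\|\bs s\|^2] + E[\|\bs e\|^2].
\end{equation*}

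At this point the computation reduces to two routine identities: $E[\|\bs s\|^2] = n_s P$ by the definition of $P$ as the per-signal average energy of the shaped set $\mathcal R$, and $E[\|\bs e\|^2] = n_s \sigma^2$ because $\bs e$ has i.i.d.\ $\mathcal N(0,\sigma^2)$ components. Dividing by $n_s$ yields $P(\Lambda) = P + \sigma^2$ as claimed.

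The main obstacle, and where I would spend the most care, is the very first step: reconciling the purely geometric definition of ${\rm Var}(\Lambda)$ (an integral of $\|\bs x - \bar{\bs x}\|^2$ over the Minkowski sum $\mathcal R + E_0$ as in Definition~\ref{def:NSM}) with the probabilistic second moment of $\bs y$ computed above. This requires showing that the induced density of $\bs y$ is essentially uniform over $\Lambda$ up to negligible tail mass; sphere hardening of the Gaussian noise inside $E_0$ together with the uniformity of $\bs s$ over $\mathcal R$ is the standard tool to bridge the two viewpoints, and the residual discrepancy between $n_s\sigma^2$ and the sphere radius squared $n_s(\sigma^2+\delta)$ can be absorbed in the $o(n)$ term already present in Theorem~\ref{thm:main}.
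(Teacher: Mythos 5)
Your proposal is correct and follows essentially the same route as the paper: the paper's entire proof is the observation that $\bs e$ is independent of $\bs s$, so ${\rm Var}(\bs s+\bs e)={\rm Var}(\bs s)+{\rm Var}(\bs e)=n_s(P+\sigma^2)$, which is exactly your central decomposition. The additional care you take in reconciling the geometric second moment of the region $\Lambda$ with the probabilistic second moment of $\bs y$ addresses a genuine ambiguity that the paper simply glosses over by treating ${\rm Var}(\Lambda)$ as the variance of the channel output, so your extra step is a refinement rather than a different argument.
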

\begin{proof}
Since in the AWGN channel, the noise sequence $\bs e$ does not depend on the transmitted signal sequence $\bs s$, we have that
\begin{equation*}
{\rm Var} (\bs s +\bs e) = {\rm Var} (\bs s) +{\rm Var} (\bs e)= n_{\rm s}(P+\sigma^2). \qedhere
\end{equation*}
\end{proof}

We have now collected all the necessary facts to formulate and prove the main statement of the paper, Theorem~\ref{th}.

\begin{proof}

According to Definition~\ref{def:NSM},
the normalized second moment 
%(NSM) \cite{erez2005lattices} 
of $\Lambda$ is 
\begin{equation}\label{eq:nsm}
 G(\Lambda) = 
\frac{
\frac{1}{n_{\rm s}}\int_\Lambda \|\bs x\|^2 d\bs x
}{\vol{\Lambda}^{1+\nicefrac{2}{n_{\rm s}}}}
=\frac{P(\Lambda)}{\vol{\Lambda}^{\nicefrac{2}{n_{\rm s}}}}\;, 
\end{equation}
where for $P(\Lambda)$ is determined by (\ref{eq:P}).
Hence,
\begin{IEEEeqnarray}{c}\label{eq:volumeL}
{\rm vol}(\Lambda)=\left(\frac{P(\Lambda)}{G(\Lambda)}\right)^{\nicefrac{n_{\rm s}}{2}}=
\left(
\frac{P+\sigma^2}{G(\Lambda)} 
\right)^{\nicefrac{n_{\rm s}}{2}}.   
\end{IEEEeqnarray} 
% \maiara{27.09}{We do not have $\Lambda_{n_{\rm s}}$ anymore, right?}

For the volume of the even-dimensional sphere of radius $\sigma \sqrt{n_{\rm s}}$, we can use the formula
\begin{IEEEeqnarray}{c} \label{eq:volumeS}
\vol{{\rm E}_0}=\frac{1}{\sqrt{2\pi n_{\rm s}}} (2\pi e \sigma^2)^{\nicefrac{n_{\rm s}}{2}}.
%=
%\sqrt{G_{\rm sp}}\left(\frac{\sigma^2}{G_{\rm sp}}\right)^{\nicefrac{n_{\rm s}}{2}}.
\end{IEEEeqnarray}
Substituting (\ref{eq:volumeL}) and (\ref{eq:volumeS}) into (\ref{eq:e_Gamma}) together with elementary transformations proves the theorem.
\end{proof}

\subsection{Discussion of the asymptotic performance}

%\maiara{22.01}{This section looks very nice! Maybe we can still prove it.}

From Theorem~\ref{th} and~\eqref{eq:theorem} follows that the achievable rate is below the capacity by a value decreasing with decreasing the NSM of the region $\Lambda$. 
The shape of $\Lambda$, in turn, depends on the choice of the shaping region $\mathcal R$. In the proof of the theorem, we imposed two restrictions on $\mathcal R$: the cardinality, $|\mathcal R|=2^k$, and the average signal energy, $P$. The minimum value of $G(\Lambda)=\nicefrac{1}{2\pi e}$ is achieved if  $\Lambda$ is a sphere, which would be true if the region $\mathcal {R}$ were a sphere too. 

If $\mathcal {R}$ was chosen as the Voronoi region of the origin in Construction A lattice over a large enough finite field, then the achievability of $G(\Lambda)\to\nicefrac{1}{2\pi e}$ would follow, e.g., from \cite[Thm.~3]{erez2005lattices}. Finite-length lattices approaching this limit were considered in  
\cite{zhou2017shaping,yurkov2007random, KudryashovYurkov08}. 

We cannot prove that in our ensemble of PAM-modulated coset shaping codes, the shaping region approaches a sphere with $n$ and $m\to \infty$. However, our experiments with 
codes of different lengths have shown that the shaping gain tends to its ultimate limit as the length of the shaping code increases. These observations led us to the following conjecture. 

\begin{conjecture}
    If $n\to \infty$ then ${G(\Lambda)} \to \nicefrac{1}{2\pi e}$ and $R \to C$, where $C$ denotes the AWGN capacity. 
\end{conjecture}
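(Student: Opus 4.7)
The plan is to split the conjecture into its two parts and observe that the second follows from the first: if $G(\Lambda) \to \frac{1}{2\pi e}$, then the rate bound of Theorem~\ref{thm:main} immediately yields $R \to \frac{1}{2}\log_2(1 + P/\sigma^2) = C$, since $\log_2(2\pi e\, G(\Lambda)) \to 0$ and the $o(n)$ term vanishes. So the whole burden is proving $G(\Lambda) \to \frac{1}{2\pi e}$.

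For this, I would choose the shaping region $\mathcal{R}$ to be the image in $\mathcal{X}^{n_s}$ of the Voronoi region $\mathcal{V}(\Lambda_L)$ of a Construction A lattice $\Lambda_L \subset \mathbb{R}^{n_s}$ that is simultaneously good for channel coding, MSE quantization, and covering, scaled so that its per-dimension second moment equals $P$. By the results cited just before the conjecture \cite{zhou2017shaping,yurkov2007random,KudryashovYurkov08,erez2005lattices}, such lattices exist with $G(\mathcal{V}(\Lambda_L)) \to \frac{1}{2\pi e}$ as $n_s \to \infty$. Because $m$ grows jointly with $n$, the PAM spacing becomes fine enough that the discrete approximation of $\mathcal{V}(\Lambda_L)$ by $\mathcal{R}$ perturbs its volume and second moment only by $O(2^{-m})$ relative terms, which one would verify by a standard Riemann-sum argument.

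Next, analyze $\Lambda = \mathcal{R} + E_0$. By the lemma preceding Theorem~\ref{thm:main}, the per-dimension second moment of $\Lambda$ is exactly $P + \sigma^2$. For the volume, combine Brunn--Minkowski with a covering-radius bound: since a lattice that is good for covering has ratio of circumradius to inradius of $\mathcal{V}(\Lambda_L)$ tending to $1$, the Minkowski sum is sandwiched between two Euclidean balls whose radii differ by a $(1+o(1))$ factor. Together with $\vol{\mathcal{V}(\Lambda_L)}^{2/n_s} \to 2\pi e P$ and \eqref{eq:volumeS}, this yields $\vol{\Lambda}^{2/n_s} \to 2\pi e (P+\sigma^2)$, and dividing by $P(\Lambda)=P+\sigma^2$ produces $G(\Lambda) \to \frac{1}{2\pi e}$.

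The main obstacle is the Minkowski-sum step: Brunn--Minkowski by itself only provides a lower bound on $\vol{\Lambda}^{1/n_s}$, whereas $G(\Lambda) \le \frac{1}{2\pi e} + o(1)$ requires the matching upper bound, which in turn demands that $\mathcal{V}(\Lambda_L)$ be nearly spherical in shape and not merely of correct second moment. Assembling all three goodness properties into a single lattice ensemble, and then controlling the joint limit $n = m n_s \to \infty$ in which both the PAM discretization error and the lattice concentration must vanish together, is the delicate point and is presumably why the authors state this as a conjecture rather than a theorem.
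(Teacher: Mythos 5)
You should first be aware that the paper does not prove this statement: it is explicitly labelled a conjecture, and the sentence immediately preceding it warns that ``the shaping region is not the fundamental region of the Construction A lattice.'' There is therefore no proof in the paper to compare yours against; your text is an attack on what the authors leave open. Your reduction of the second claim ($R\to C$) to the first ($G(\Lambda)\to\nicefrac{1}{2\pi e}$) via Theorem~\ref{thm:main} is correct and is clearly what the authors intend, so the review below concerns only the first claim.

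Your sketch has two genuine gaps beyond the one you flag. First, by taking $\mathcal R$ to be a discretized Voronoi cell of a good lattice you change the object of the conjecture: the paper's $\Lambda$ is built from the shaping set that the coset scheme actually produces, and the authors' remark is precisely that this set is \emph{not} a lattice Voronoi region; establishing the limit for a hand-picked $\mathcal R$ (which must moreover contain exactly $2^k$ points of $\set{X}^{n_s}$ to be admissible in Lemma~\ref{lemma2}) does not establish it for the scheme. Second, the Minkowski-sum step hides an inconsistency inherited from the paper rather than a missing upper bound. With $P(\Lambda)$ pinned at $P+\sigma^2$ by~\eqref{eq:P}, an upper bound on $G(\Lambda)$ needs a \emph{lower} bound on $\vol{\Lambda}$, and Brunn--Minkowski (or your ball sandwich) supplies one --- but it supplies too much: for nearly spherical $\mathcal R$ and $E_0$ one gets $\vol{\Lambda}^{\nicefrac{2}{n_s}}\to 2\pi e\bigl(\sqrt{P}+\sqrt{\sigma^2+\delta}\bigr)^2>2\pi e(P+\sigma^2)$, which would push $G(\Lambda)$ strictly below $\nicefrac{1}{2\pi e}$, impossible since the ball minimizes the NSM among regions of given volume. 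The contradiction is resolved by noticing that $P+\sigma^2$ in~\eqref{eq:P} is the variance of the channel \emph{output distribution} (uniform signal convolved with Gaussian noise), whereas Definition~\ref{def:NSM} integrates against the \emph{uniform} measure on $\Lambda$; for a Minkowski sum these second moments do not coincide (for two concentric balls the uniform second moment of the sum is $(\sqrt{P}+\sigma)^2$ per dimension, not $P+\sigma^2$). Until this identification is repaired --- say by redefining $G(\Lambda)$ relative to the typical-output region of squared radius $\approx n_s(P+\sigma^2)$ rather than the full Minkowski sum --- neither your volume computation nor the algebra connecting~\eqref{eq:volumeL}, \eqref{eq:volumeS} and~\eqref{eq:e_gamma} closes, and this, more than lattice goodness or discretization error, is why the statement remains a conjecture.
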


The scheme in the proof of the main theorem differs from that in the description of the system in Section~\ref{sec:coset} and in simulations. First, instead of a linear code, we considered a coset code (a random shift of a linear code). Second, in the proof, we assumed that the modulation order is large. The reasonable question is: how important are these assumptions for achieving near-optimal system performance? 

Indeed, both assumptions are technical and used to simplify the arguments. Coset codes are used to guarantee that the all-zero codeword and its corresponding signal sequence have the same probability as any other codeword. Due to the high modulation order assumption, we can state that any sphere contains the same number of points of the signal space, regardless of the position of its center.

\section{Bounds on the maximum-likelihood decoding error probability} 
\label{bounds}

We present next a modified  Gallager's random coding bound (RCB) on the error probability of shaped coded modulation signals. First, we revisit  Gallager's random coding bound.   

% \maiara{22.01}{Read up to here!}

\subsection{Gallager's random coding bound}

For a random ensemble of the length $n$ linear codes with codeword symbols chosen  independently according to the probability  distribution $Q(x)$, the ensemble average  maximum-likelihood (ML) decoding error probability $P_{\rm e}$ for transmission over a discrete-time memoryless channel is bounded as \cite{gallager1968information}
\begin{equation} \label{eq:RCB}
P_{\rm e}\le \exp \left  \{-nE_{\rm G}(E,\sigma,Q,{\tilde R}_{\rm T}) \right\},
\end{equation}
where %\boris{}{(16) is slightly changed}
\begin{align} \label{eq:exp}
& E_{\rm G}(E,\sigma,Q, {\tilde R}_{\rm T})= \nonumber \\
& {\small \max_{ 0\le \rho \le 1} -\ln \left(\int_{-\infty}^\infty \left(\sum_{x \in {\mathcal X}}f_{{\mathcal Y}|{\mathcal X}}(y|x)^{\frac{1}{1+\rho}}   Q(x)\right)^{1+\rho}dy\right) -\rho {\tilde R}_{\rm T}}. 
\end{align}
% \begin{align} \label{eq:E0}
% &  E_0(E,\rho,\sigma,Q)% {\tilde R}_{\rm T}) 
% = \nonumber \\
% & -\ln \left(\int_{-\infty}^\infty \left(\sum_{x \in {\mathcal X}}f_{{\mathcal Y}|{\mathcal X}}(y|x)^{\frac{1}{1+\rho}}   Q(x)\right)^{1+\rho}dy\right),
%  \end{align}
% \vspace{0.5cm}
% \begin{equation} \label{eq:exp}
%  E_{\rm G}(E,\sigma,Q, {\tilde R}_{\rm T})=
% \max_{ 0\le \rho \le 1} E_0(E,\rho,\sigma,Q)-\rho {\tilde R}_{\rm T},\\
% \end{equation}
% \begin{align} \label{eq:E0}
% &  E_0(E,\rho,\sigma,Q)% {\tilde R}_{\rm T}) 
% = \nonumber \\
% & -\ln \left(\int_{-\infty}^\infty \left(\sum_{x \in {\mathcal X}}f_{{\mathcal Y}|{\mathcal X}}(y|x)^{\frac{1}{1+\rho}}   Q(x)\right)^{1+\rho}dy\right),
%  \end{align}
Observe that $\mathcal X$ denotes  the input alphabet, ${\mathcal Y}$ is channel output alphabet, 
{$E=\sum_{x\in \mathcal X} x_i^2 Q(x_i)$} 
is signal energy, $\sigma$ is noise standard deviation, ${\tilde  R}_{\rm T}$ is transmission rate, $f_{{\mathcal Y}|{\mathcal X}}(y|x)$ is pdf of the channel noise and now, $E_G$ denotes Gallager's exponent.
Assume that we transmit information by $M$-PAM signaling over the additive white Gaussian noise (AWGN) channel, that is 
$y_i=x_i+n_i$, where $E={\mathbb{E}}\{x_i^2\}$, $\mathbb{E}\{\cdot\}$
denotes mathematical expectation,
${\mathcal X}=\{-M-1,...,-3,-1,1,3,...M-1\}$, and the noise variance is $\sigma^2$. Then the signal-to-noise ratio is $\textnormal{SNR}=\nicefrac{E}{\sigma^2}$ and 
\begin{equation*}
f_{{\mathcal Y}|{\mathcal X}}(y|x)=f_N(y-x)=\frac{1}{\sqrt{2\pi}\sigma}\exp\{-\nicefrac{(y-x)^2}{2\sigma^2}\}.
\end{equation*}

Next, we compare three approximations of the RCB: 
\begin{itemize}
 \item[a)]the
bound is computed for the probability distribution for which the channel capacity is achieved (RCBC);
 \item[b)] the probability
distribution is optimized for the error probability $P_{\rm e}$ = $10^{-6}$
(RCBE);
\item[c)] the error exponent is optimized for each SNR value
(RCBO).
\end{itemize}

In Figs. \ref{fig:gal_short16}-\ref{fig:gal_long}, bounds on the FER of ML decoding for shaped and unshaped coded PAM (QAM) signaling for these three scenarios are shown. For comparison, the simulation results are given on the same figures. In all five figures, {in captions},
we also present the ultimate SNR limit (${\rm SNR}_{\rm limit}$) for the corresponding transmission rate and modulation. 

\begin{figure}[!t]
\centering
\includegraphics[width=85mm]{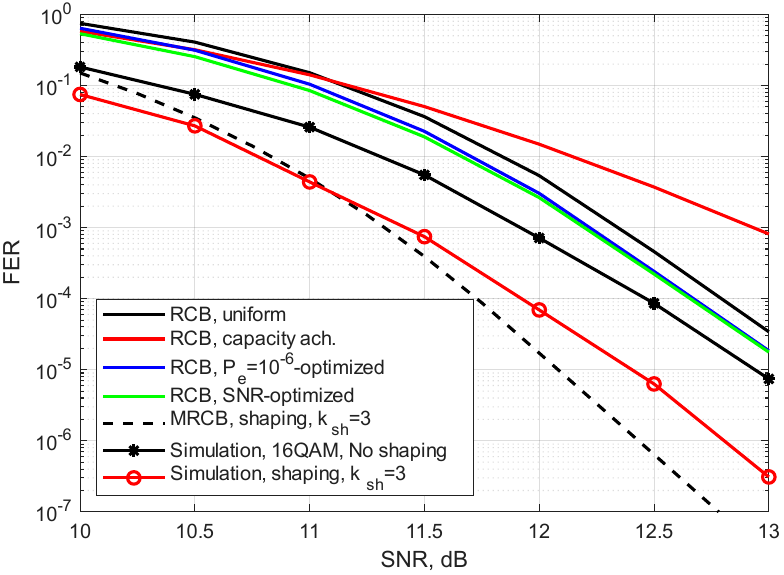}   
\caption{Gallager's random coding bound on the FER of ML decoding for the length 128 bit codes with unshaped and shaped 16QAM signaling,  ${\tilde R}_{\rm T}=3.0$ bpcu, and the ordered statistic decoding (OSD) simulation results, $\textnormal{SNR}_\textnormal{limit}=8.86$ dB. }
\label{fig:gal_short16} 
\end{figure}

\begin{figure}[!t]
\centering
\includegraphics[width=85mm]{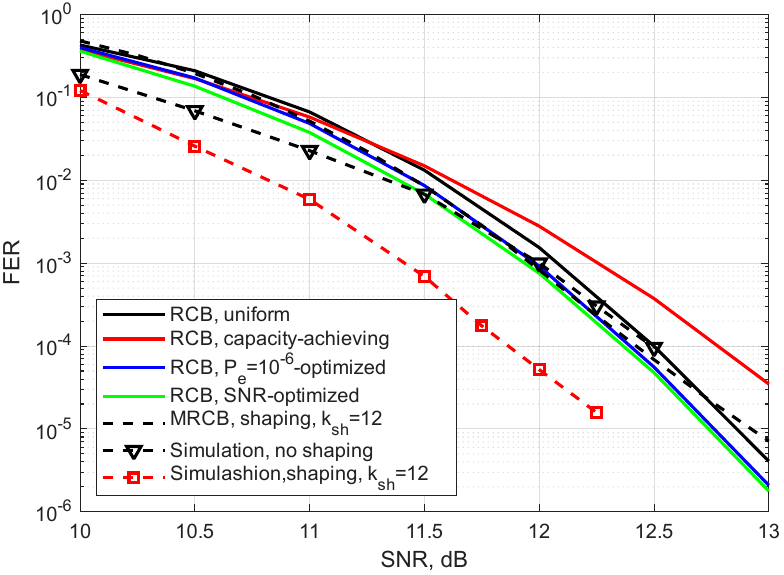}   
\caption{Gallager's random coding bound on the FER of ML decoding for the length 192 bit codes with unshaped and shaped 64QAM signaling, ${\tilde R}_{\rm T}=3.0$ bpcu, and the ordered statistic decoding (OSD) simulation results, $\textnormal{SNR}_\textnormal{limit}=8.45$ dB. }
\label{fig:gal_short64} 
\end{figure}

\begin{figure}[!t]
\centering
\includegraphics[width=85mm]{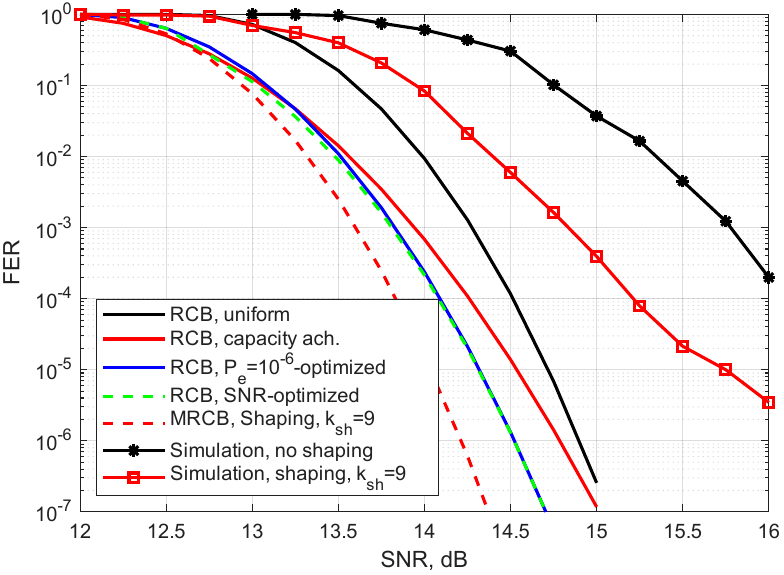}   
\caption{Gallager's random coding bound on the FER of ML decoding for the length 648 bit codes with unshaped and shaped 64QAM signaling, ${\tilde R}_{\rm T}=4$ bpcu, and simulation results, $\textnormal{SNR}_\textnormal{limit}=11.84$ dB. }
\label{fig:gal_mid648} 
\end{figure}

\begin{figure}[!t]
\centering
\includegraphics[width=85mm]{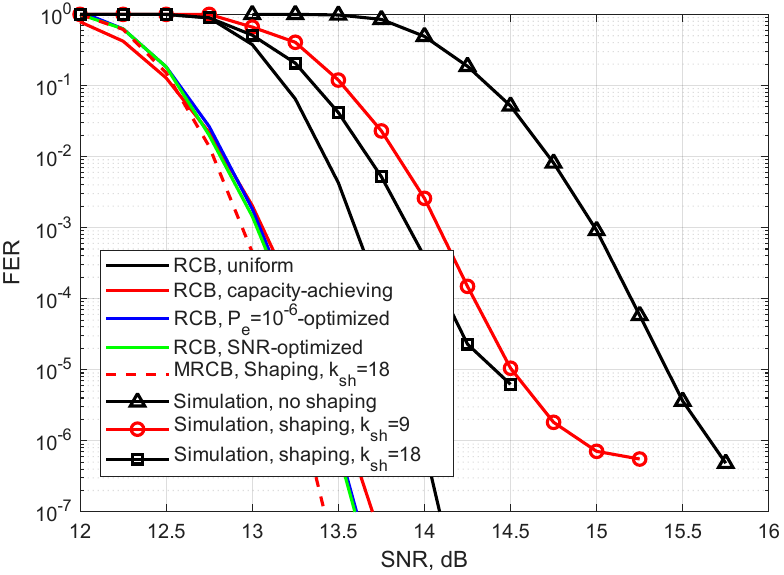}   
\caption{Gallager's random coding bound on the FER of ML decoding for the length 1944 bit codes with unshaped and shaped 64QAM signaling, ${\tilde R}_{\rm T}=4$ bpcu, and simulation results, $\textnormal{SNR}_\textnormal{limit}=11.84$ dB. }
\label{fig:gal_mid1944} 
\end{figure}

For short codes (schemes $S1,S2$ for unshaped signaling and $S3, S4$ for shaped signaling in Table \ref{tab:short}) in Figs. \ref{fig:gal_short16}, \ref{fig:gal_short64}, even the RCBO version of the bound appears to be loose, noticeably larger than the simulation results. While the RCBO bound almost coincides with RCBE, the capacity-optimum bound (RCBC) is the worst. Moreover, it is worse than the RCB for the uniform distribution. This observation is unexpected: the capacity-achieving distribution is often considered optimal across all SNRs. According to the RCB, the expected shaping gain is only about 0.15 dB. 
In this scenario, practical codes show better performance than predicted by the RCB, with a practical shaping gain of about 0.5 dB.

For mid-length WiFi codes (schemes $M1$,$M2$ for unshaped signaling and $M3$,$M5$ for shaped signaling in Table \ref{tab:midcod}) in Figs. \ref{fig:gal_mid648}, \ref{fig:gal_mid1944}: the differences among the three versions are minor, though the RCBC remains the weakest. However, as expected, these three bounds are better than the RCB for the uniform distribution. Practical low-complexity decodable quasi-cyclic (QC)-LDPC codes lose about 1 dB with respect to the RCB. The expected shaping gain according to the RCB is of order 0.5 dB.  

For $n=64800$-long codes,  Fig.~\ref{fig:gal_long}, the three versions of the RCB 
become almost identical. The expected shaping gain exceeds 1 dB. At $P_{\rm e}=10^{-6}$, optimal coding can be only 0.3 dB from the Shannon limit. Our simulation results are 1.0 dB away from the RCB. 

\subsection{Modified Gallager's random coding bound}
Coset shaping is applied to the amplitude bits, reducing the energy of the transmitted signal. The energy gain 
\begin{equation*}
g=\frac{n_{\rm sh}E_{\rm PAM} }{{(m-1)\rm E}\{\|\psi( \vect{v}(\bs u_{\rm sh}, \bs u))) \|^2\}}, 
\end{equation*}
where $E_{\rm PAM}=\tfrac{M^2-1}{3}$ and the mathematical expectation is numerically evaluated during the encoding simulation. Averaging is performed over all codewords. In the case of mid-length and long LDPC codes, the gain is estimated for blocks of length equal to the length of the shaping code. In Tables \ref{tab:short}-\ref{tab:NBLDPC}, we give values of $g$ for short, mid-length, and long codes used for simulations. 

In order to compute the modified RCB bound for shaped coded modulation (MRCB), we rewrite (\ref{eq:RCB}) as 
\begin{equation} \label{eq:RCBSCM}
P_{\rm e}\le \exp \left  \{-nE_{\rm G}(E_{\rm av},\sigma,
Q_{\rm u}, {\tilde R}_{\rm T}) \right\},
\end{equation}
where for $E_{\rm G}(\cdot)$ we use formula (\ref{eq:exp}), $Q_{\rm u}$ denotes uniform distribution over the set of signal points, and $E_{\rm av}=\nicefrac{E_{\rm PAM}}{g}$. The MRCB for the three scenarios is shown in  {Figs.~\ref{fig:gal_short16}--\ref{fig:gal_long}}. This bound has less gap to the simulation results than  all versions of Gallager's bound.

\begin{figure}[!t]
\centering
\includegraphics[width=85mm]% {BoundsLong.pdf}
{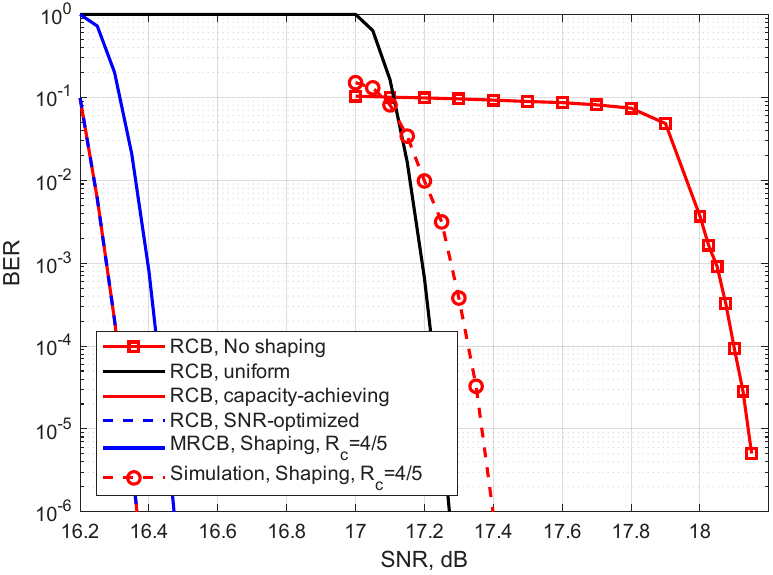}
\caption{Gallager's random coding bound on the FER of ML decoding for the length 64800 bits code with unshaped and shaped 64QAM signaling, ${\tilde R}_{\rm T}=5.33$ bpcu, and simulation results, $\textnormal{SNR}_\textnormal{limit}=15.96$ dB. }
\label{fig:gal_long} 
\end{figure}

\section{Simulation results \label{sec:numres}}

All comparisons are given in terms of QAM-modulation, as shaped QAM performance is studied in most technical papers.

Considering that our proposed coset shaping is a geometric shaping, it is natural to compare our results with Voronoi shaping, where we choose the state-of-the-art work in~\cite{li2025coded}. We will refer to the following Voronoi constellations. 
\begin{description}
    \item[${\rm E}_8^{24}$] Denotes a $8$-dimensional constellation, based on the partition $\Integers^8/8{\rm E}_8$. This means that the shaping lattice is $8{\rm E}_8$, seeing as a sublattice of $\Integers^8$. The partition order $M=2^{24}$ implies that it carries $24$ bits per $8$-dimensional symbol. 
     \item[$\Lambda_{24}^{72}$] Denotes a $24$-dimensional constellation, based on the partition $\Integers^{24}/2\Lambda_{24}R_{24}$, where the shaping lattice $\Lambda_{24}R_{24}$ is a scaled version of the Leech lattice $\Lambda_{24}$. The partition order $M=2^{72}$ implies that it encodes $72$ bits per $24$-dimensional symbol. 
\end{description}
Furthermore, the mention of \emph{hybrid} means an alternative for the Gray mapping, which combines set-partition~\cite{Ungerboeck82} and pseudo-Gray mapping used by authors~\cite{li2025coded} to increase SNR gains over QAM, for high SNR regimes.

Shaping for short codes is performed according to the scheme shown in Fig.~\ref{fig:shmatr}. Parameters of ECCs and shaping codes are given in Table \ref{tab:short}, where eBCH and sBCH denote extended and shortened BCH codes, respectively, and GCC denotes the \emph{generalized concatenated code}. The fourth-order OSD simulation results are plotted in Fig.~\ref{fig:sim_short}. For comparison, in the same figure, we show the simulation results for 192-bit-long polar codes in \cite{cocskun2019efficient} used with constant composition distribution matching (CCDM).

Shaping for length $n=648$ bit and $1944$ bit codes is performed according to the scheme presented in Fig.~\ref{fig:lmatr}. Parameters of ECCs and shaping codes are given in Table \ref{tab:midcod}. We consider QC-LDPC codes in the WiFi standard, specified by the rate $R_{\rm c}$, the base code of length $n_{\rm b}$, and the lifting factor $L$. The QC shaping codes in the schemes $M3$, $M4$, $M5$, and the schemes $L3$ and $L4$ below
are chosen from tables \cite{Chen:codetables}.
Since the codes for the same parameters are not unique in \cite{Chen:codetables}, for convenience, we present the lists of code generators in Table~\ref{tab: QCcodes}.

The QC codes are determined by their parity-check matrices in the form: 
$\begin{pmatrix}
  \mat{C}_1&\mat{C}_2&...&\mat{C}_{\nicefrac{n}{p}}   
\end{pmatrix}
$,
where $\mat{C}_i$, $i \in [1: \nicefrac{n}{p}]$ is a circulant matrix of size $p \times p$ obtained by $p$ shifts of the generator {$\vect{g}_i$}, and $n/p$ is an integer.  
The FER performance of BP decoding for codes of lengths 648 and 1944 bits, with shaped and unshaped 64-QAM signaling, is shown in Fig.~\ref{fig:gal_mid648} and Fig.~\ref{fig:gal_mid1944}, respectively. 

\begin{figure}[!t]
\centering
\includegraphics[width=85mm]{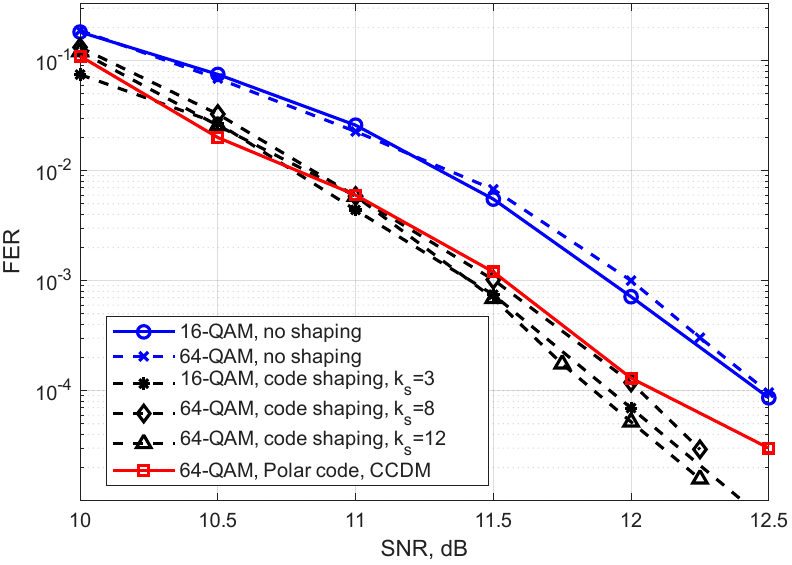}   
\caption{OSD FER performance for the length 128 bit and 192 bit codes with unshaped and shaped 16QAM and 64QAM signaling, respectively, ${\tilde R}_{\rm T}=3.0$ bpcu, in comparison with the shaped length 192 polar codes in \cite{cocskun2019efficient}, $\textnormal{SNR}_\textnormal{limit}=8.86$ dB. }
\label{fig:sim_short} 
\end{figure}

Parameters of length $n=64800$-bit NB QC-LDPC codes used with unshaped and shaped 256-QAM are presented in Table \ref{tab:NBLDPC}. They are determined by the $r_{\rm b}\times n_{\rm b} $ base parity-check matrix, the lifting factor $L$, the size of the alphabet $q$, and the column weight distribution $\lambda(x)$. The NB QC-LDPC codes were constructed as in \cite{bocharova2022design}. Shaping codes were constructed according to the scheme in Fig.~\ref{fig:lmatr}.  

In Fig.~\ref{smul_long}, we compare the BP decoding BER performance for the $n=64800$-bit NB QC-LDPC codes used with unshaped and shaped 256-QAM with the same performance of both binary LDPC codes from the DVB-S2 standard used with 64-QAM shaped as in \cite{li2025coded} and binary LDPC codes from the ATSC standard used with unshaped 64-QAM. The parameters of the simulated communication scenarios are tabulated in Table~\ref{tab:NBLDPC_scen}. Fifty iterations of BP decoding were simulated. In the same figure, we present the ultimate SNR limit for 256-QAM signaling ($\textnormal{SNR}_\textnormal{limit}$) and the SNR limit for unshaped 256-QAM signaling ($\textnormal{SNR}_\textnormal{unif}$) used at ${\tilde R}_{\rm T}$=5.33 bpcu.  
 
\begin{table}[!t]
\center
\renewcommand{\arraystretch}{1.5}
\caption{Parameters of communication scenarios for long codes\label{tab:NBLDPC_scen} }
\begin{tabular}{|c|l|l|c|}
\hline
\textbf{QAM} & \boldsymbol{$R_c$}  & \textbf{Code info}    & S\textbf{haping code}  \\                                   \hline

 256  & \nicefrac{2}{3}  &  ATSC \cite{ATSCstandard}& No shaping    \\ \hline
 256  & \nicefrac{2}{3}  &  NB QCLDPC, $q=2^4$   & No shaping     \\ \hline    
  64    &  \nicefrac{2}{3} & DVB-S2, \cite{standard2014digital}   & hybrid  $E_8^{24}$,\cite{li2025coded}    \\ \hline
 64    &  \nicefrac{2}{3} & DVB-S2, \cite{standard2014digital}   & hybrid  $\Lambda_{24}^{72}$ \cite{li2025coded}   \\ \hline
 256 & \nicefrac{3}{4}  & NB QCLDPC,  $q=2^3$  & coset [135,15,54] \\
\hline         
256 & \nicefrac{4}{5}  &  NB QCLDPC, $q=2^3$   &coset [90,16,32]  \\\hline
%256 & \nicefrac{3}{4}  & NB QCLDPC, $q=2^3$  & Sphere, $(120, 2^{40})$   \\ %\hline 
%256 & \nicefrac{4}{5}  &NB QCLDPC,  $q=2^3$  & Sphere, $(120, 2^{64})$  \\ %\hline 
\end{tabular} 
\end{table}

\begin{figure}[!t]
\centering
\renewcommand{\arraystretch}{1.5}
\includegraphics[width=85mm]% {Figures/SimulLong.pdf}% 
% {simul256_v2.pdf}   
{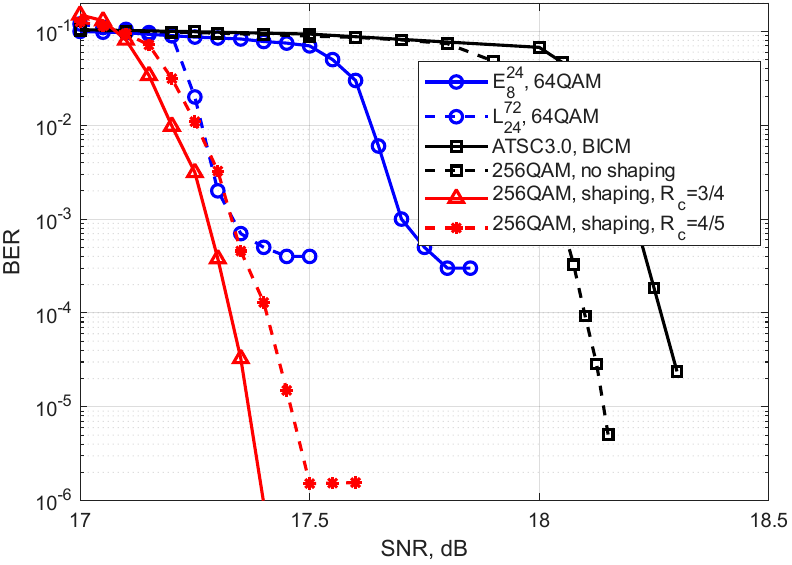}
\caption{Comparison of NB LDPC coded QAM-256 signaling with and  without shaping with  \cite{li2025coded}, ${\tilde R}_{\rm T}=5.33$ bits per QAM signal,  $\textnormal{SNR}_\textnormal{unif}$ is 16.98 dB, $\textnormal{SNR}_\textnormal{limit}$ is 15.96 dB }
\label{smul_long} 
\end{figure}

\section{Conclusion}
A new shaping technique applicable to coded QAM signaling was proposed and analyzed. The introduced shaper belongs to the class of multidimensional geometric shapers and allows shaping both the message and the parity bits of the codeword. Asymptotic analysis of the new technique indicated that it is capacity-achieving as both code length and modulation order tend to infinity. The presented simulation results and comparisons for a variety of error-correcting codes with different rates and lengths, used with shaped PAM signaling, suggest that the proposed technique outperforms known solutions, especially in the error-floor region. It is shown that the modified Gallager bound has a smaller gap with the simulation results than the Gallager bound. Overall, our proposed coset-shaping technique provides strong theoretical guarantees and substantial practical performance improvements for coded QAM signaling systems.

% \section*{Acknowledgments}
% This should be a simple paragraph before the References to thank those individuals and institutions who have supported your work on this article.

% {\appendix[Proof of the Zonklar Equations]
% Use $\backslash${\tt{appendix}} if you have a single appendix:
% Do not use $\backslash${\tt{section}} anymore after $\backslash${\tt{appendix}}, only $\backslash${\tt{section*}}.
% If you have multiple appendixes use $\backslash${\tt{appendices}} then use $\backslash${\tt{section}} to start each appendix.
% You must declare a $\backslash${\tt{section}} before using any $\backslash${\tt{subsection}} or using $\backslash${\tt{label}} ($\backslash${\tt{appendices}} by itself
%  starts a section numbered zero.)}

% \appendices

% \section*{Parameters}

% We tabulate parameters of ECCs and shaping codes for the three communication scenarios.  

\begin{table}[!t]
\center
\caption{Short-length schemes, ${\tilde R}_{\rm T}=3$ bits/QAM signal, $n_s=64$, OSD-4,
shaping gain is measured at $P_e=10^{-4}$ \label{tab:short} }
\renewcommand{\arraystretch}{1.5}
\begin{tabular}{|c|c|m{1.5cm}|m{2cm}|m{1cm}|}
\hline
Scheme &QAM & $[n,k_c,d]$-code & Shaping code $[k_c,k_{\rm sh},d_{\rm sh}]$, energy gain (dB)
& Shaping gain (dB) \\ \hline
\multicolumn{5}{|c|}{No shaping }   \\ \hline
S1        &  16   &  [128,96,10], eBCH & ---            &   ---      \\ \hline
S2        &  64   & [192,96,28], GCC in \cite{blokh1974coding} & ---           &    ---    \\ \hline        
\multicolumn{5}{|c|}{Coset shaping }   \\ \hline
S3        &  16   &  [128,99,10], eBCH  & [99,3,56], ([7,3,4]$\times$14),
$g=0.783$&    0.62   \\ \hline 
S4        &  64   & [192,108,24], sBCH\cite{helgert1973shortened} 
& [108,12,47] in  \cite{Grassl:codetables}, $g=2.053$    &    0.75     \\ \hline                  
\end{tabular} 
\end{table}

\begin{table}[!t]
%\label{tab:midcod}
\center \renewcommand{\arraystretch}{1.5}
\caption{Schemes based on QC-LDPC codes in the WiFi (IEEE 802.11) standard with  64QAM, ${\tilde R}_{\rm T}=4$ bits/QAM signal, $n_{\rm s}=216$ and 648, 50 decoding iterations, shaping gain is measured at $P_e=10^{-4}$ \label{tab:midcod}}
\renewcommand{\arraystretch}{1.5}
\begin{tabular}{|c|c|c|m{2cm}|c|m{1cm}|}
\hline
Scheme  & $R_c$ & $n_{\rm b}\times L$ &  Shaping code $[n^{b}_{\rm sh},k^{b}_{\rm sh},d_{\rm sh}]$, energy gain (dB)& $N_{\rm b}$ & Shaping gain  (dB) \\ \hline
\multicolumn{6}{|c|}{No shaping }   \\ \hline
M1        &    2/3  &  $24\times 27$   &   ---      &---      &   ---      \\ \hline
M2        &    2/3  &  $24 \times 81$   &   ---     &---      &   ---      \\ \hline        
\multicolumn{6}{|c|}{Coset shaping }   \\ \hline
M3        &  3/4   & $24 \times 27$  & [72,9,32], QC1, $g=2.378$  & 6 &    0.95  \\ \hline 
M4        &  3/4   & $24 \times 81$  & [72,9,32], QC1, $g=2.378$&18  % in  \cite{Chen:codetables} 
&    0.88  \\ \hline 
M5        &  3/4   & $24\times 81$  & [144,18,52], QC2, $g=2.425$ & 9  &    1.07    \\ 

 \hline      
 \end{tabular} 
\end{table}
  
\begin{table}[!t]
\center\renewcommand{\arraystretch}{1.5}
\caption{Schemes with long NB QC-LDPC codes, $n\approx 64800$ bits,\\
$R=5.33$ bits/256-QAM signal, 50 decoding iterations, shaping gain is measured at BER=$10^{-4}$ \label{tab:NBLDPC} }
\begin{tabular}{|c|m{0.4cm}|m{2cm}|m{1.7cm}|m{0.3cm}|m{0.8cm}|}
%\begin{tabular}{|c|m{0.5cm}|m{2cm}|m{1.5cm}|m{0.75cm}|}
\hline
Scheme  & $R_c$ & ECC: $r_{\rm b} \times n_{\rm b}$, $L,q,\lambda(x)$  & Shaping code $[n^{b}_{\rm sh},k^{b}_{\rm sh},d_{\rm sh}]$, energy gain (dB) & $ N_{\rm b}$ & Shaping  gain (dB) \\ \hline
\multicolumn{6}{|c|}{No shaping }   \\ \hline
L1        &  2/3  &  ATSC \cite{ATSCstandard}& ---   &   ---  & ---    \\ \hline
L2        &  2/3  &  $40 \times 120$,135,$2^4$, $80x^2$+$33x^3$+$7x^{16}$   & ---   & ---  &  --- \\    \hline 
%\multicolumn{5}{|c|}{Voronoi shaping }   \\ \hline
%L3       &   64    &  2/3 & DVB-S2, \cite{standard2014digital}   & hybrid $E_8^{24}$ \cite{li2025coded}  & 17.70, ---    \\ \hline
%L4       &   64    &  2/3 & DVB-S2, \cite{standard2014digital}   & hybrid $\Lambda_{24}^72$ \cite{li2025coded} &17.32, ---   \\ \hline
\multicolumn{6}{|c|}{Coset shaping }   \\ \hline
L3       &  3/4  & $30 \times 120$,180,$2^3$, $70x^2$+$49x^3$+$x^{15}$   &    [135,15,54]  QC3, $g=3.011$ &320   & 0.86   \\ \hline
L4       &  4/5  &  $30 \times 150$,144,$2^3$, $90x^2+60x^3$    &[90,16,32], subcode of QC4,
$g=4.254$ 
&480   & 0.85  \\   \hline
\end{tabular} 
\end{table}

\begin{table}[!t]
\center\renewcommand{\arraystretch}{1.5}
\caption{Generators of quasicyclic codes used in simulations \label{tab: QCcodes} }
\begin{tabular}{|c|c|m{0.7cm}|m{3.5cm}|}
\hline
Code & $[n,k,d]$&Circu\-lant size $p$ & Generators in octal form \\ \hline
QC1 &  [72,9,32]&12 &1233, 1247, 1127, 723, 245,55\\ \hline
QC2& [144,18,52]&18 &1, 75523, 147153, 111127, 146657, 16333 51573 45425\\ \hline
QC3& [135,15,54]&15 & 5052, 56364, 76254, 4575, 5536, 4734, 665, 5333, 76744 \\\hline
QC4& [90,18,32]& 18 & 401, 123225, 123633, 122525, 71125 \\\hline
\end{tabular} 
\end{table}

%\section*{Proof of the First Zonklar Equation}
%Appendix one text goes here.
% You can choose not to have a title for an appendix if you want by leaving the argument blank
%\section*{Proof of the Second Zonklar Equation}
%Appendix two text goes here.}

% \section{References Section}
% You can use a bibliography generated by BibTeX as a .bbl file.
%  BibTeX documentation can be easily obtained at:
%  http://mirror.ctan.org/biblio/bibtex/contrib/doc/
%  The IEEEtran BibTeX style support page is:
%  http://www.michaelshell.org/tex/ieeetran/bibtex/
 
 % argument is your BibTeX string definitions and bibliography database(s)
%\bibliography{IEEEabrv,../bib/paper}
%
% \section{Simple References}
% You can manually copy in the resultant .bbl file and set second argument of $\backslash${\tt{begin}} to the number of references
%  (used to reserve space for the reference number labels box).

%\begin{thebibliography}{1}
\bibliographystyle{IEEEtran}
\bibliography{biblio}
%\end{thebibliography}

% \newpage

% \section{Biography Section}
% If you have an EPS/PDF photo (graphicx package needed), extra braces are
%  needed around the contents of the optional argument to biography to prevent
%  the LaTeX parser from getting confused when it sees the complicated
%  $\backslash${\tt{includegraphics}} command within an optional argument. (You can create
%  your own custom macro containing the $\backslash${\tt{includegraphics}} command to make things
%  simpler here.)
 
% \vspace{11pt}

% \bf{If you include a photo:}\vspace{-33pt}
% \begin{IEEEbiography}[{\includegraphics[width=1in,height=1.25in,clip,keepaspectratio]{fig1}}]{Michael Shell}
% Use $\backslash${\tt{begin\{IEEEbiography\}}} and then for the 1st argument use $\backslash${\tt{includegraphics}} to declare and link the author photo.
% Use the author name as the 3rd argument followed by the biography text.
% \end{IEEEbiography}

% \vspace{11pt}

% \bf{If you will not include a photo:}\vspace{-33pt}
% \begin{IEEEbiographynophoto}{John Doe}
% Use $\backslash${\tt{begin\{IEEEbiographynophoto\}}} and the author name as the argument followed by the biography text.
% \end{IEEEbiographynophoto}

\vfill

\end{document}